\pgfplotsset{compat=1.18} 
\newtheorem{lemma}{Lemma}
\newtheorem{fact}{Fact}
\newtheorem{observation}{Observation}
\newtheorem{invariant}{Invariant}
\theoremstyle{definition}
\newtheorem{definition}{Definition}
\newcommand{\PONE}{Weighted $k$-Server -- Service Pattern Construction}
\newcommand{\WKS}{weighted $k$-server}
\newcommand{\PTWO}{Weighted $k$-server -- Revealed Service Pattern}
\newcommand{\pone}{W$k$S-SPC}
\newcommand{\ptwo}{W$k$S-RSP}
\newcommand{\mone}{forced movement}
\newcommand{\mtwo}{unforced movement}
\renewcommand{\AA}{\ensuremath{\mathcal A}\xspace}
\newcommand{\II}{\ensuremath{\mathcal I}\xspace}
\newcommand{\JJ}{\ensuremath{\mathcal J}\xspace}
\newcommand{\aay}[1]{\textcolor{Green}{#1}}
  \def\rem#1{{\marginpar{\raggedright\scriptsize #1}}}
  \newcommand{\ashr}[1]{\rem{\textcolor{Red}{$\bullet$ #1}}}
  \newcommand{\aayr}[1]{\rem{\textcolor{Green}{$\bullet$ #1}}}
  \newcommand{\nikr}[1]{\rem{\textcolor{Blue}{$\bullet$ #1}}}
  \newcommand{\aay}[1]{#1}
  \newcommand{\aayr}[1]{}
  \newcommand{\ashr}[1]{}
  \newcommand{\nikr}[1]{}
\DeclareMathOperator{\poly}{poly}
\title{A Decomposition Approach to the Weighted \texorpdfstring{$k$}{k}-server Problem} %TODO Please add
\author{
  Nikhil Ayyadevara\thanks{University of Michigan, Ann Arbor} \and
  Ashish Chiplunkar\thanks{Indian Institute of Technology, Delhi, India, \url{https://www.cse.iitd.ac.in/\~ashishc/}}
  \and
  Amatya Sharma\thanks{University of Michigan, Ann Arbor, \url{https://aaysharma.github.io/}}
}
\date{}
\begin{document}
\maketitle

\begin{abstract}
A natural variant of the classical online $k$-server problem is the \textit{weighted $k$-server problem}, where the cost of moving a server is its weight times the distance through which it moves. Despite its apparent simplicity, the weighted $k$-server problem is extremely poorly understood. Specifically, even on uniform metric spaces, finding the optimum competitive ratio of randomized algorithms remains an open problem -- the best upper bound known is $2^{2^{k+O(1)}}$ due to a deterministic algorithm (Bansal et al., 2018), and the best lower bound known is $\Omega(2^k)$ (Ayyadevara and Chiplunkar, 2021).

With the aim of closing this exponential gap between the upper and lower bounds, we propose a decomposition approach for designing a randomized algorithm for weighted $k$-server on uniform metrics. Our first contribution includes two relaxed versions of the problem and a technique to obtain an algorithm for weighted $k$-server from algorithms for the two relaxed versions. Specifically, we prove that if there exists an $\alpha_1$-competitive algorithm for one version (which we call \textit{\PONE}) and there exists an $\alpha_2$-competitive algorithm for the other version (which we call \textit{\PTWO}), then there exists an $(\alpha_1\alpha_2)$-competitive algorithm for weighted $k$-server on uniform metric spaces. Our second contribution is a $2^{O(k^2)}$-competitive randomized algorithm for \PTWO. As a consequence, the task of designing a $2^{\poly(k)}$-competitive randomized algorithm for weighted $k$-server on uniform metrics reduces to designing a $2^{\poly(k)}$-competitive randomized algorithm for \PONE. Finally, we also prove that the $\Omega(2^k)$ lower bound for weighted $k$-server, in fact, holds for \PTWO.
\end{abstract}

% \newpage
\pagenumbering{arabic}

\section{Introduction}\label{sec:intro}

The $k$-server problem proposed by Manasse et al.~\cite{ManasseMS_STOC88} is a fundamental problem in online computation, and it has been actively studied for over three decades. In this problem, we are given a metric space $M$ and $k$ identical servers $s_1,\ldots,s_k$ located at points of $M$. In every round, a point of $M$ is requested, and an online algorithm serves the request by moving (at least) one server to the requested point. The objective is to minimize the total distance traversed by all $k$ servers. 

Like several other online problems, the performance of algorithms for the $k$-server problem is measured using the framework of competitive analysis introduced by Sleator and Tarjan~\cite{SleatorT85}. An online algorithm for a minimization problem is said to be $\alpha$-competitive if, on every input, the ratio of the algorithm's (expected) cost to the cost of the optimal solution is at most $\alpha$, possibly modulo an additive constant independent of the online input. In the deterministic setup, Manasse et al.~\cite{ManasseMS_STOC88} showed that no $k$-server algorithm can be better than $k$-competitive on any metric space with more than $k$ points. In their breakthrough result, Koutsoupias and Papadimitriou~\cite{KoutsoupiasP_JACM95} gave the best known deterministic algorithm that is $(2k-1)$-competitive on every metric space, famously known as the Work Function Algorithm (WFA). In the setup of randomized algorithms, it is conjectured that the competitive ratio of $k$-server is \aay{$O(\poly(\log k))$}, and this remains unsolved. Very recently, refuting the so-called \textit{randomized $k$-server conjecture}, Bubeck, Coester, and Rabani~\cite{bubeck2022randomized} exhibited a family of metric spaces on which the randomized competitive ratio of the $k$-server problem is $\Omega(\log^2 k)$.

The $k$-server problem is a generalization of the online paging problem. The paging problem concerns maintaining in a ``fast'' memory a subset of $k$ pages out of the $n$ pages in a ``slow'' memory. In each round, one of $n$ ($\gg k$) pages is requested, and it must replace some page in the fast memory, unless it is already in the fast memory.
%In the paging problem, we are given a fast memory of size $k$ and slow memory of size $n$ $(\ggg k)$. Some page from the slow memory is requested in each step, and the algorithm must ensure that the requested page is present in the fast memory. 
The objective is to minimize the number of page replacements.
% \ashr{Nikhil: Explain the paging problem in a few sentences.}
The paging problem is exactly the $k$-server under the uniform metric on the set of pages. The paging problem has been well studied, and several deterministic algorithms like Least Recently Used (LRU), First In First Out (FIFO), etc.\ are \aay{known to be} $k$-competitive~\cite{SleatorT85}. The randomized algorithm by Achlioptas et al.~\cite{AchlioptasCN_TCS00} is known to be $H(k)$-competitive, matching the lower bound by Fiat et al.~\cite{FiatKLMSY_JAlg91}. Here $H(k)=1+1/2+\cdots+1/k=\Theta(\log k)$.

\subsection{Weighted \texorpdfstring{$k$}{k}-server}

%\paragraph{Weighted $k$-server.} 
The weighted $k$-server problem, first defined by Newberg~\cite{Newberg91}, is a natural generalization of the $k$-server problem. In the weighted $k$-server problem, the servers are distinguishable: the $i$'th server has weight $w_i$, where $w_1\leq\cdots\leq w_k$. The cost incurred in moving a server is its weight times the distance it travels. The objective is to minimize the total weighted distance moved by all $k$ servers. It is easy to see that an $\alpha$-competitive algorithm for the (unweighted) $k$-server problem has a competitive ratio of at most $\alpha\cdot w_k/w_1$ for the weighted $k$-server problem. However, this bound can be arbitrarily bad as $w_k/w_1$ \aay{is} unbounded. So, the challenge is to establish weight-independent bounds on the competitive ratio of the weighted $k$-server problem. Surprisingly, this simple introduction of weights makes this problem incredibly difficult, and a weight-independent upper bound on the competitive ratio for an arbitrary metric is only known for the case when $k\leq2$~\cite{Sitters_SIAMJC14}. 

Owing to its difficulty on general metric spaces, it is natural to completely understand the weighted $k$-server problem on the simplest class of metric spaces, the uniform metric spaces first. Uniform metric spaces are the ones in which every pair of points is separated by a unit distance.
The objective of this problem translates to minimizing the weighted sum of the number of movements of each server, and thus, this problem is equivalent to paging with the cost of a page replacement dependent on the cache slot it is stored in\footnote{Note that this problem is different from weighted paging~\cite{Young_Algorithmica02}, where the weights are on the pages (points in the metric space) instead on the cache slots (servers). In fact, weighted paging is equivalent to unweighted $k$-server on star metrics.}. 
%This problem was first introduced and studied by Fiat and Ricklin~\cite{FiatR_TCS94}. 
In their seminal work, Fiat and Ricklin~\cite{FiatR_TCS94} gave a deterministic algorithm for the weighted $k$-server problem on uniform metrics with a competitive ratio doubly exponential in $k$, which was later improved by Bansal et al.~\cite{BansalEKN_SODA18} to $2^{2^{k+2}}$. 
This doubly exponential behavior of the competitive ratio was proven tight by Bansal et al.~\cite{BansalEK_FOCS17} when they showed that the deterministic competitive ratio is no less than $2^{2^{k-4}}$. 

In the randomized setup, the only known algorithm which uses randomization in a non-trivial manner is the memoryless algorithm of Chiplunkar and Vishwanathan~\cite{ChiplunkarV_TAlg20}, which has a competitive ratio of %\nikr{replace server $s_i$ by $i$'th server}
$1.6^{2^k}$. Chiplunkar and Vishwanathan also showed that this ratio is tight for the class of randomized memoryless algorithms. Recently, Ayyadevara and Chiplunkar~\cite{AyyadevaraC_ESA21} showed that no randomized algorithm (memoryless or otherwise) can achieve a competitive ratio better than $\Omega(2^k)$. Closing the exponential gap between the $1.6^{2^k}$ upper bound and the $\Omega(2^k)$ lower bound on the randomized competitive ratio is still an open problem.

Very recently, Gupta et al.~\cite{gupta2023efficient} studied the weighted $k$-server problem in the offline and resource augmentation settings, showing the first hardness of approximation result for polynomial-time algorithms.

% Our approach; informal definition of service patterns

\subsection{Our Contributions}

Throughout this paper, we focus on the weighted $k$-server problem on uniform metrics, and we avoid mentioning the metric space henceforward. Considering the fact that the competitive ratio of a server problem is typically exponentially better in the randomized setting than the deterministic setting, it is reasonable to conjecture that there exists a randomized $2^{\poly(k)}$-competitive randomized algorithm for weighted $k$-server.
In this paper, we propose a way of designing such an algorithm using our key idea of decomposing the weighted $k$-server problem into the following two relaxed versions. %We call the first version \textit{\PONE} and the second version \textit{\PTWO}.

\paragraph*{\PONE\ (\pone).} The input is the same as the weighted $k$-server. The difference is that, in response to each request, the algorithm must only commit to the movement of some subset of servers, without specifying where those servers move to. However, it is required that there exists some solution to the given instance that agrees with the algorithm's server movements. Note that the algorithm could potentially benefit from not being lazy, that is, by moving 
%need not be lazy and it could move 
more than one server at the same time.
%, even when it has a server occupying the requested point. 
The (expected) cost of the algorithm is, as defined earlier, the weighted sum of the number of movements of each server (recall that we are working on a uniform metric space so the distance between every pair of points is one unit). An algorithm is said to be $\alpha$-competitive if the (expected) cost of its solution is at most $\alpha$ times the optimum cost.

\paragraph*{\PTWO\ (\ptwo).} 
 % In this version, the adversary generates a request sequence, and there is an \uh with the full knowledge of the request sequence beforehand. With each request, the \uh serves the request, and reveals to the algorithm the subset of servers it moved, without revealing their destination. Note that the \uh's solution need not be optimal, and the \uh could possibly move servers in preparation for serving future requests. Given a request, the algorithm's job is to ensure that some server covers the requested point, just like in weighted $k$-server. An algorithm is said to be $\beta$-competitive if the cost of its output is at most $\beta$ times the cost of the \uh's solution. Note that the algorithm is competing against the possibly sub-optimal partial solution revealed by the \uh, and not against the optimum.
  In this version, the adversary, in addition to giving requests, is obliged to help the algorithm by providing additional information as follows. The adversary must serve each request and reveal to the algorithm the subset of servers it moved. Note that the adversary does not reveal the destination to which it moved its servers -- revealing destinations makes the problem trivial because the algorithm can simply copy the adversary's movements. Given a request and the additional information about the adversary's server movements, the algorithm is required to move its own servers to cover the request. In an ideal scenario where the adversary serves the requests optimally, we require the algorithm to produce a solution whose cost competes with the cost of the optimal solution. However, consider a malicious adversary which, in an attempt to be as unhelpful to the algorithm as possible, produces a far-from-optimum solution and shares its information with the algorithm. In this case,  we do not require that the algorithm competes with the optimum solution -- such an algorithm would already solve the weighted $k$-server problem without the adversary's help. Instead, we require the algorithm to compete with the adversary's revealed solution. Formally, an algorithm is said to be $\alpha$-competitive if the (expected) cost of its output is at most $\alpha$ times the cost of the adversary's (possibly sub-optimal) solution. 

%In this version, the adversary serves each request and reveals to the algorithm the subset of servers it moved, without revealing their destination. Note that the adversary's solution need not be optimal, and the adversary could possibly move servers in preparation for serving future requests. Given a request, the algorithm's job is to ensure that some server covers the requested point, just like in weighted $k$-server. An algorithm is said to be $\alpha$-competitive if the (expected) cost of its output is at most $\alpha$ times the cost of the adversary's solution. Note that the algorithm is competing against the possibly sub-optimal partial solution revealed by the adversary, and not necessarily against the optimum.

%(To Do: More intuition about the HA)

Obviously, an algorithm for the weighted $k$-server problem gives an algorithm for each of the above problems. Interestingly, we prove that the converse is also true. Formally,

\begin{restatable}[Composition Theorem]{theorem}{compthm}\label{two-problem-theorem}
If there exists an $\alpha_1$-competitive algorithm for \pone\ and there exists an $\alpha_2$-competitive algorithm for \ptwo, then there is an $(\alpha_1\alpha_2)$-competitive algorithm for the \WKS\ on uniform metrics.
\end{restatable}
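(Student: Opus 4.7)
The plan is to chain the two given algorithms. Let $A_1$ be an $\alpha_1$-competitive algorithm for \pone\ and let $A_2$ be an $\alpha_2$-competitive algorithm for \ptwo. On an online request sequence $r_1, r_2, \ldots$ for the weighted $k$-server problem, I would, for each request $r_t$: first feed $r_t$ to $A_1$ and let $S_t \subseteq \{1, \ldots, k\}$ be the subset of servers that $A_1$ commits to moving; then feed the pair $(r_t, S_t)$ to $A_2$, treating $A_1$ in the role of the ``helper'' from the \ptwo\ formulation; finally move the combined algorithm's servers exactly as $A_2$ dictates. Because $A_2$ is required to cover every request, the resulting sequence of moves is automatically a valid \wks\ solution.

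\textbf{Cost analysis.} Let $\mathrm{OPT}$ denote the cost of an optimal \wks\ solution on $(r_t)_t$ and let $C_1 = \sum_t \sum_{i \in S_t} w_i$ denote the total cost of the pattern produced by $A_1$. My first step is to observe that every optimal \wks\ solution induces a valid service pattern of the same cost, so the optimum of the induced \pone\ instance is at most $\mathrm{OPT}$ and the $\alpha_1$-competitiveness of $A_1$ gives expected pattern cost at most $\alpha_1 \cdot \mathrm{OPT}$. My second step is to argue that the stream $(r_t, S_t)_t$ is precisely what $A_2$ is designed to consume: by the validity condition of \pone, there exists a feasible server assignment realising $(S_t)_t$, and its cost is exactly $C_1$. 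Conditioning on the randomness of $A_1$ and applying the $\alpha_2$-competitiveness of $A_2$ against this conceptual helper bounds the expected $A_2$-cost by $\alpha_2 \cdot C_1$; taking a further expectation over $A_1$'s randomness yields the desired $\alpha_1 \alpha_2 \cdot \mathrm{OPT}$ bound.

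\textbf{Main obstacle.} The delicate point will be justifying the use of $A_1$ as a ``helper'' in the sense required by \ptwo. A \ptwo\ helper is formally required to actually move concrete servers and merely withhold their destinations, whereas $A_1$ never commits to any destinations at all. I plan to resolve this by observing that \ptwo\ forbids the helper from revealing destinations anyway, so the competitive analysis of $A_2$ cannot depend on which witnessing assignment is chosen; for each realisation of $A_1$'s random bits, any realising assignment guaranteed by the validity clause of \pone\ can be retrofitted as the helper with no change to $A_2$'s behaviour or cost. A secondary care point is ensuring that $A_2$'s competitive bound is robust enough to apply when the helper is oblivious to $A_2$'s internal state (as is the case here) and to allow the two independent sources of randomness to be averaged in the standard order; both should be routine once the realisability issue is pinned down.
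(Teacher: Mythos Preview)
Your proposal is correct and follows essentially the same approach as the paper's proof: compose $A_1$ and $A_2$ by piping $A_1$'s committed movements into $A_2$ as the helper's reveal, then chain the two competitive bounds by conditioning on $A_1$'s randomness and taking an outer expectation. The ``obstacle'' you flag dissolves under the paper's formal definitions of \pone\ and \ptwo, where the helper's input to $A_2$ is exactly the pair $(\sigma_t,\ell_t)$ subject only to the feasibility of the resulting service pattern---precisely what $A_1$ is required to output---so no retrofitting of a witnessing assignment is needed.
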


We prove this theorem in~\Cref{sec_composition}. As a consequence of this theorem, it is enough to design $2^{\poly(k)}$-competitive algorithms for \pone\ and \ptwo\ to close the exponential gap between the upper and lower bounds on the randomized competitive ratio of weighted $k$-server. We already present such an algorithm for \ptwo\ in~\Cref{sec_wksha}. We prove, 

\begin{restatable}{theorem}{maintheorem}\label{Main-theorem}
There is a randomized algorithm for \ptwo\ with a competitive ratio of $2^{O(k^2)}$.
\end{restatable}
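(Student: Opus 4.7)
The plan is to build the algorithm by induction on $k$. Let $\rho_k$ denote the competitive ratio of the resulting algorithm $A_k$. The base case $k=1$ is trivial: ALG's single server must match ADV's single server (both move to the requested point exactly when ADV does), so $\rho_1 = 1$. For the inductive step, I would assume an algorithm $A_{k-1}$ with ratio $\rho_{k-1}$ and construct $A_k$ so that $\rho_k \le 2^{O(k)} \cdot \rho_{k-1}$; the recurrence then solves to $\rho_k = 2^{O(k^2)}$.

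The structural idea is to exploit the revealed service pattern to decompose the request sequence into phases delimited by ADV's movements of the heaviest server $s_k$. Within a single phase ADV moves only servers $s_1,\dots,s_{k-1}$, so the sub-instance essentially looks like a $(k-1)$-server \ptwo\ problem to which the inductive algorithm applies. The algorithm $A_k$ therefore runs an independent copy of $A_{k-1}$ on its lighter servers during each phase, and whenever ADV reports moving $s_k$, $A_k$ moves its own heavy server $s_k^{ALG}$ to the current request point, paying at most $w_k$ for that step. Since one phase ends at each such move, the total cost of $A_k$'s heavy server is within a constant factor of ADV's own $s_k$-cost (the constant accounting for the fact that ADV's $s_k$ need not actually be at the request when several of ADV's servers move simultaneously).

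The subtle issue that prevents the argument from giving $\rho_k \le \rho_{k-1}$ outright is that a request inside a phase may occur at the location of $s_k^{ADV}$, which is generally unknown to $A_k$ because ADV may have made unforced moves in earlier steps. When such a request arrives, ADV reports a small or empty subset, but $A_k$'s own lighter servers and its $s_k^{ALG}$ are not necessarily at the right places, so $A_k$ may still incur movement cost that is not directly chargeable to ADV's move on that step. To handle this, $A_k$ maintains (using randomization) a distribution over the configurations of ADV that are consistent with the revealed pattern so far, and it uses this distribution to decide where to send its own moving servers. The branching at each step — ADV reveals a subset of up to $k$ servers, one of which must cover the request while the others may go anywhere — is exactly what forces a per-level overhead of $2^{O(k)}$ instead of a constant.

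The main technical obstacle, and the step that requires the most care, will be controlling the movement cost of the randomized coupling between $A_k$'s configuration and ADV's unknown configuration. I would expect to use a potential-function argument with amortization that charges every lighter-server move of $A_k$ either to a revealed lighter-server move of ADV (and so absorbed by the factor $\rho_{k-1}$ applied to the phase) or to the eventual $s_k$-move that terminates the phase (absorbed by $w_k$, which dominates the total lighter-server mass accumulated during the phase once the $2^{O(k)}$ branching factor is accounted for). Making this amortization tight — so that the uncertainty about $s_k^{ADV}$'s position, combined with the interleaving of multiple-server moves within a phase, inflates the per-level overhead by only $2^{O(k)}$ rather than worse — is the central difficulty that the proof must surmount.
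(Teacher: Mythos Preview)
Your inductive scheme has a genuine gap at the point you yourself flag as ``subtle.'' Within a phase the $(k-1)$-level restriction of the revealed service pattern is \emph{not} a valid \ptwo\ instance: it is feasible only for the request sequence with all requests to the unknown location $p=s_k^{\text{ADV}}$ deleted, and the algorithm does not know which requests those are. So you cannot hand the phase to a black-box $A_{k-1}$. Your proposed fix --- maintain a distribution over adversary configurations and couple to it --- is where the real work lies, and you have not identified the structural fact that makes it tractable. In particular, your justification that the overhead is $2^{O(k)}$ because ``ADV reveals a subset of up to $k$ servers, one of which must cover the request while the others may go anywhere'' is not the right accounting: the others really may go \emph{anywhere} in $U$, so the per-step branching is unbounded, not $2^k$.

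What the paper exploits is a purely combinatorial bound (the dichotomy property of Bansal et al.\ and its level-$\ell$ generalization): conditioned on the positions of the heavier servers, the set $Q_t^{\ell}$ of points still feasible as the label of the current level-$\ell$ interval is either all of $U$ or has size at most $n_{\ell}\le 2^{2^{\ell+3\log\ell}}$, and it only shrinks within a phase. The algorithm then places server $\ell$ at a uniformly random point of $Q_t^{\ell}$; a harmonic-sum argument shows that the expected number of times this random point is evicted between two forced moves is at most $H(n_{\ell})=2^{O(\ell)}$, and it is \emph{this} $H(n_{\ell})$ --- not any subset-branching --- that produces the per-level $2^{O(\ell)}$ factor and hence the $2^{O(k^2)}$ product. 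Your proposal contains no analogue of this dichotomy bound, and without it neither the distribution you want to maintain nor the amortization you sketch can be controlled.
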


This reduces the task of designing a $2^{\poly(k)}$-competitive algorithm for weighted $k$-server to designing such an algorithm for \pone, a potentially easier problem.

Can we improve the $2^{O(k^2)}$ upper bound for \ptwo\ to, for example, $\poly(k)$? We answer this question in the negative. We show that, in fact, the lower bound construction by Ayyadevara and Chiplunkar~\cite{AyyadevaraC_ESA21} for weighted $k$-server applies to \ptwo\ too\footnote{The construction by Bansal et al.~\cite{BansalEK_FOCS17} applies too, and for the same reason, implying a doubly exponential lower bound on the deterministic competitive ratio of \ptwo.}, giving the following result. %because the adversary's strategy is such that it is possible to determine, in an online manner, the set of servers the adversary moves with each request
%Therefore, we have a lower bound of $\Omega(2^k)$ on the randomized competitive ratio of \ptwo. Thus, our upper bound of $2^{O(k^2)}$ is tight modulo the polynomial in the exponent.

\begin{restatable}{theorem}{lowerbound}
The randomized competitive ratio of \ptwo\ is $\Omega(2^k)$.
\end{restatable}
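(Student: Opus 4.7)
The plan is a black-box reduction from the $\Omega(2^k)$ lower bound of Ayyadevara and Chiplunkar~\cite{AyyadevaraC_ESA21}. The core observation is that the extra side information handed to a \ptwo algorithm is already implicit in that construction: the adversary there is fundamentally an online procedure whose step-$t$ choices depend only on the past requests. Hence the revealed subset of servers adds nothing that a \WKS algorithm could not itself simulate, and the lower bound ought to transfer essentially verbatim.

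First I would revisit the AC construction to isolate the following ingredients: a distribution $\mathcal{D}$ over request sequences and an adversary strategy $B$ that, for every $\sigma$, produces a valid solution $B(\sigma)$, such that every deterministic \WKS algorithm $\mathrm{ALG}$ satisfies
\[\mathbb{E}_{\sigma\sim\mathcal{D}}[\mathrm{cost}(\mathrm{ALG}(\sigma))] \;\geq\; \Omega(2^k)\cdot\mathbb{E}_{\sigma\sim\mathcal{D}}[\mathrm{cost}(B(\sigma))].\]
The key property I would extract is that $B$ is genuinely online: at each step $t$, the subset of servers $B$ moves is a deterministic function of the prefix $\sigma_1,\ldots,\sigma_t$ together with $B$'s internal state. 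This is strongly suggested both by the paper's footnote and by the analogous remark about the Bansal et al.\ construction for the deterministic case, since adversary strategies in such lower bounds are conventionally described exactly this way.

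Given this, the reduction via Yao's principle is immediate. Define a distribution $\mathcal{D}'$ over \ptwo instances by drawing $\sigma\sim\mathcal{D}$ and committing the adversary to the solution $B(\sigma)$; at each step the revealed subset of servers is the one prescribed by $B$. For any deterministic \ptwo algorithm $\mathrm{ALG}$, construct a \WKS algorithm $\mathrm{ALG}'$ that, on receiving $\sigma_t$, internally simulates $B$ on the prefix to generate the side information, hands $(\sigma_t,\text{side info})$ to $\mathrm{ALG}$, and mimics its server movements. Because $B$'s step-$t$ action depends only on $\sigma_1,\ldots,\sigma_t$, $\mathrm{ALG}'$ is a legitimate online algorithm with $\mathrm{cost}(\mathrm{ALG}'(\sigma))=\mathrm{cost}(\mathrm{ALG}(\sigma,B(\sigma)))$. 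Substituting into the AC bound yields $\mathbb{E}[\mathrm{cost}(\mathrm{ALG})]\geq\Omega(2^k)\cdot\mathbb{E}[\mathrm{cost}(B)]$, which lifts to randomized \ptwo algorithms by Yao's principle.

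The main obstacle will be making the ``onlineness'' of $B$ formally precise. Concretely, I would walk through the AC lower bound proof, identify the hypothetical server configuration the adversary maintains, and verify that both the request choice and the set of servers moved at step $t$ are measurable with respect to $\sigma_1,\ldots,\sigma_t$ alone (and not, e.g., with respect to future potential-function bookkeeping performed only in the analysis). Once this verification is nailed down, the reduction above is essentially a one-liner and requires no new insight about the \ptwo dynamics themselves.
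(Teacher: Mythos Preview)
Your high-level intuition is right, but the proposed black-box reduction has a genuine gap at exactly the point you flag as ``the main obstacle.'' In the Ayyadevara--Chiplunkar construction the adversary is an oblivious randomized procedure that \emph{generates} the request sequence using its own coins; it does not react to past requests. Concretely, at the top level the adversary repeatedly samples a point $p$ uniformly from $S$, updates a coupon-collector marking state, sets $\ell_{ext}\in\{k-1,k\}$ according to whether all points are now marked, and only then calls the recursive \textsf{strategy} procedure on $S\setminus\{p\}$. The value $\ell_t$ attached to the first request of that block equals this $\ell_{ext}$, which depends on $p$ through the marking state; but the first request is merely some point of $S\setminus\{p\}$ chosen by further independent randomness inside \textsf{strategy}, and a single such point does not pin down $p$. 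Hence $\ell_t$ is \emph{not} a measurable function of $\sigma_1,\ldots,\sigma_t$, so a \WKS\ algorithm cannot reconstruct the revealed service pattern online, and your simulation of $B$ cannot be carried out. (Even retroactively, $p$ need not be recoverable from a whole block of requests---with positive probability some point of $S\setminus\{p\}$ is never requested---but the decisive failure is already the online one above.) Your verification step would therefore fail, and with it the reduction.

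What the paper does is close in spirit but technically different: it keeps the same randomized input distribution, has the adversary simply \emph{emit} $(\sigma_t,\ell_t)$ together, and then re-inspects the Ayyadevara--Chiplunkar analysis of the algorithm's cost. The key observation is that their lower bound on the algorithm's expected cost per top-level \textsf{strategy} call is proved using only the randomness of requests inside that call; the argument nowhere exploits the algorithm's ignorance of $\ell_t$, so it carries over verbatim to any deterministic \ptwo\ algorithm. The remaining work---checking that the emitted $\ell_t$'s constitute a feasible service pattern and bounding that pattern's cost---is done directly. In short, the transfer succeeds by verifying that the original cost analysis is robust to revealing $\ell_t$, not by converting a \ptwo\ algorithm into a \WKS\ algorithm.
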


The proof of this result is deferred to~\Cref{sec_lowerbound}.

\section{Preliminaries}\label{sec:prelims}

%\ash{Consider a uniform metric space over a universe $U$. Number the $k$ servers in increasing order of weight, so that `$i$'th server' means the $i$'th lightest server.
%The online request sequence is $\rho = \sigma_1\sigma_2 \ldots\sigma_T$, where each $\sigma_t\in U$ is a point requested at time $t$.} \ashr{Move this to the beginning of Section 4?}

In this section we define the problems \pone\ and \ptwo\ formally, but before that, we restate the definitions of some terms introduced by Bansal et al.~\cite{BansalEK_FOCS17} which will be needed in our problem definitions.

\subsection{Service Patterns, Feasible Labelings, Extensions}

Throughout this paper, we assume without loss of generality that all the servers of the algorithm and the adversary move in response to the first request. Given a solution to an instance of the weighted $k$-server problem with $T$ requests, focus on the movements of the $\ell$'th server for an arbitrary $\ell$. The time instants at which these movements take place partition the interval $[1,T+1)$ into left-closed-right-open intervals so that the server stays put at some point during each of these intervals. Thus, ignoring the locations of the servers and focusing only on the time instants at which each server moves, we get a tuple of $k$ partitions of $[1,T+1)$, also known as a service pattern. Formally,

\begin{definition}[Service Pattern and Levels~\cite{BansalEK_FOCS17}]
A $k$-tuple $\II = (\II^1,\ldots,\II^k)$ is called a \textit{service pattern} over an interval $[t_{\text{begin}},t_{\text{end}})$ if each \aay{$\II^\ell$} is a partition of $[t_{\text{begin}},t_{\text{end}})$ comprising of left-closed-right-open intervals with integer boundaries. We call $\II^{\ell}$ the $\ell$'th \textit{level} of $\II$.
\end{definition}

Observe that the cost of a solution is completely determined by its service pattern $\II = (\II^1,\ldots,\II^k)$: the cost equals the sum over $\ell\in\{1,\ldots,k\}$ of the number of intervals in $\II^{\ell}$ times the weight of the $\ell$'th server.

In order to completely specify a solution, in addition to a service pattern $\II = (\II^1,\ldots,\II^k)$, we need to specify for each $\ell$ and each interval $I\in\II^{\ell}$ the location of the $\ell$'th server during the time interval $I$. We refer to this assignment as a labeling of the service pattern. Moreover, to serve the $t$'th request $\sigma_t$, we need at least one server to occupy $\sigma_t$ at time $t$, that is, we need that there exists a level of $\II$ in which the (unique) interval containing $t$ is labeled $\sigma_t$. Such a labeling is called a feasible labeling. Formally,

\begin{definition}[Labeling and Feasibility~\cite{BansalEK_FOCS17}]\label{feasible-labeling}
A \textit{labeling} of a service pattern $\II= (\II^1,\ldots,\II^k)$ is a function from the multi-set $\II^1\uplus\cdots\uplus\II^k$ to the set $U$ of points in the metric space.
We say that a labeling $\gamma$ of $\II$ is \textit{feasible} with respect to a request sequence $\rho=(\sigma_1,\ldots,\sigma_T)$, if for each time $t$, there exists an interval $I\in\II^1\uplus\cdots\uplus\II^k$ containing $t$ such that $\gamma(I) = \sigma_t$. We say that a service pattern $\II$ is \textit{feasible} with respect to $\rho$ if there exists a feasible labeling of $\II$ with respect to $\rho$. 
\end{definition}

Recall that in the definition of the weighted $k$-server problem in \Cref{sec:intro}, we assumed that the servers are numbered in a non-decreasing order of their weights. Consider the more interesting case where the weights increase at least geometrically. If we enforce that every time a server moves, all servers lighter than it move too, we lose at most a constant factor in the competitive ratio. The advantage of this enforcement is that we have a more structured class of service patterns, called hierarchical service patterns.

\begin{definition}[Hierarchical Service Pattern~\cite{BansalEK_FOCS17}]
A service pattern $\II=(\II^1,\ldots,\II^k)$ is \textit{hierarchical} if for every $\ell\in\{1,\ldots,k-1\}$, the partition $\II^\ell$ refines the partition $\II^{\ell+1}$.
\end{definition}

Clearly, any service pattern can be made hierarchical in an online manner with at most a $k$ factor loss in the cost. Since we aim to obtain $2^{\poly(k)}$-competitive algorithms, the $k$ factor loss is affordable, and therefore, we only consider hierarchical service patterns throughout this paper. Henceforth, by service pattern we actually mean a hierarchical service pattern.

Next, consider some online algorithm for the weighted $k$-server problem and the solution it outputs on some request sequence. For each $t$, let $\II_t$ denote the service pattern corresponding to the algorithm's solution until the $t$'th request. Observe that $\II_{t-1}$ and $\II_t$ are closely related: if the algorithm moves the lightest $\ell$ servers to serve the $t$'th request (possibly $\ell=0$), then the interval $[t,t+1)$ gets added to the first $\ell$ levels of $\II_{t-1}$, whereas in each of the remaining levels, the last interval in the level merges with $[t,t+1)$. We call $\II_t$ the $\ell$-extension of $\II_{t-1}$. More formally,

%Now, we define an $\ell$-extension. We \emph{extend} the current service pattern $\II_t$ to construct $\II_{t+1}$ and serve any new requests. Thus, we define an $\ell$-extension of an interval as follows:
\begin{definition}[$\ell$-extension]
Let $\II_{t-1} = (\II_{t-1}^1,\ldots,\II_{t-1}^k)$ be a hierarchical service pattern over the interval $[1,t)$, and let $L_{t-1}^i$ %[j_i,t)
be the last interval in $\II^i_{t-1}$. For $\ell \in \{0,\ldots,k\}$, we define the $\ell$-extension of $\II_{t-1}$ to be the service pattern $\II_t = (\II_t^1,\ldots,\II_t^k)$ over the interval $[1,t+1)$ where:
\begin{itemize}
    \item $\forall i \leq \ell,\  \II_t^i = \II_{t-1}^i \cup \{[t,t+1)\}$.
    % \item $\forall i > \ell,\ \II_t^i = (\II_{t-1}^i \setminus \{[j_i,t)\}) \cup \{[j_i,t+1)\}$.
    \item $\forall i > \ell,\ \II_t^i = (\II_{t-1}^i \setminus L_{t-1}^i) \cup \{L_{t-1}^i\cup[t,t+1)\}$.
\end{itemize}
\end{definition}

% [Ashish] I propose using $t_i$ instead of $j_i$ in the above definition, and wherever we need to refer to the time of the last movement of a server.

Observe that the $\ell$-extension of a hierarchical service pattern is a hierarchical service pattern.
%Moreover, to serve the $t$'th request $\sigma_t$, we need at least one server to occupy $\sigma_t$ at time $t$, that is, we need that there exists a level of $\II$ in which the (unique) interval containing $t$ is labeled $\sigma_t$. Such a labeling is called a feasible labeling.

\subsection{Problem Definitions}

Recall that our core idea to solve \WKS\ problem is to construct an algorithm using algorithms for its two relaxed versions. We defined them informally in \Cref{sec:intro}. Their formal definitions are as follows.
%divide the problem into two smaller, relatively easier sub-problems, and then solve these smaller sub-problems to yield an algorithm for the \WKS\ problem. We formally define the two sub-problems as follows: \aayr{refer to the informal defns here?}

\begin{definition}[\PONE\ (\pone)]\label{def:p1}
    %Given a uniform metric space $U$, online input request sequence $\rho = \sigma_1\sigma_2 \ldots\sigma_T$ where each $\sigma_t\in U$ is a point requested at time $t$. 
    For every online request $\sigma_t\in U$, an algorithm for \pone\ is required to output a service pattern $\II_{t}$, which is the $\ell_t$-extension of $\II_{t-1}$ for some $\ell_t \in \{0,\ldots,k\}$, such that $\II_{t}$ is feasible with respect to the request sequence $\sigma_1,\sigma_2,\ldots,\sigma_t$. Equivalently, the algorithm outputs $\ell_t$ for each $t$.
    %, where $\ell_1=k$ because we assumed that the algorithm moves all its servers in response to the first request. 
    An algorithm for \pone\ is said to be $\alpha$-competitive if the (expected) cost of the algorithm's service pattern is at most $\alpha$ times the optimal cost.
\end{definition}
% \begin{definition}[\UH]
%     Given the request sequence (beforehand) $\sigma_1\sigma_2\ldots\sigma_T$, the \uh outputs a feasible (not necessarily optimal) service pattern. 
% \end{definition}

\begin{definition}[\PTWO\ (\ptwo)]\label{def:p2}
    For every online request $\sigma_t\in U$, the adversary reveals a service pattern $\II_{t}$, which is the $\ell_t$-extension of $\II_{t-1}$ for some $\ell_t \in \{0,\ldots,k\}$, such that $\II_{t}$ is feasible with respect to the request sequence $\sigma_1,\sigma_2,\ldots,\sigma_t$. Equivalently, the algorithm's input is the pair $(\sigma_t, \ell_t)$.
    %, where $\ell_1=k$ because we assumed that the adversary moves all its servers in response to the first request
    An algorithm for \ptwo\ is required to serve the request $\sigma_t$, i.e., move servers to ensure that $\sigma_t$ is covered by some server. An algorithm for \ptwo\ is said to be $\beta$-competitive if the (expected) cost of the algorithm's solution is at most $\beta$ times the cost of the final service pattern revealed by the adversary.
\end{definition}

\section{The Composition Theorem}\label{sec_composition}

In this section, we explain how we can construct a weighted $k$-server algorithm using algorithms for its two relaxations -- \pone\ and \ptwo\footnote{\aay{On a high level, our construction resembles the result by Ben-David et al.~\cite{Ben-DavidBKTW94}, which states that if there is an $\alpha_1$-competitive randomized algorithm against online adversary and an $\alpha_2$-competitive algorithm against any oblivious adversary, then there is an $(\alpha_1\alpha_2)$ competitive randomized algorithm for any adaptive offline adversary.}}.

% decomposition theorem 
\compthm*

\begin{proof}
%We prove the existence of an $(\alpha\cdot\beta)$-competitive algorithm for the \WKS\ using proof by construction.
Let $\AA_1$ be an $\alpha_1$-competitive algorithm for \pone, and $\AA_2$ be an $\alpha_2$-competitive algorithm for \ptwo.
%On each request, the output of $\AA_1$ is a feasible service pattern with a cost at most $\alpha$ times the optimal cost. $\AA_1$ provides this service pattern as an input to $\AA_2$. The output of $\AA_2$ has cost at most $\beta$ times the cost of the service pattern provided by $\AA_1$. This gives us the overall cost of the algorithm as $(\alpha\cdot\beta)$ times the optimal cost.
Our algorithm $\AA$ for \WKS\ internally runs the two algorithms $\AA_1$ and $\AA_2$. At all times, $\AA$ keeps each of its servers at the same point where the corresponding server of $\AA_2$ is located. For every input request $\sigma_t$, $\AA$ performs the following sequence of steps.
\begin{enumerate}
    \item $\AA$ passes $\sigma_t$ to $\AA_1$.
    \item In response, $\AA_1$ outputs an $\ell_t$ such that the service pattern $\II_{t}$, which is the $\ell_t$-extension to $\II_{t-1}$, is feasible for the request sequence $\sigma_1,\sigma_2, \ldots,\sigma_t$.
    \item $\AA$ passes $(\sigma_t,\ell_t)$ to $\AA_2$.
    \item In response, $\AA_2$ moves its servers to serve the request $\sigma_t$.
    \item $\AA$ copies the movements of $\AA_2$'s servers.
\end{enumerate}

To analyze the competitiveness of $\AA$, consider an arbitrary sequence $\rho$ of requests, and let $T$ denote its length. Let $\text{OPT}$ denote the cost of an optimal solution for $\rho$. Denote the cost of a service pattern $\II$ by $\text{cost}(\II)$. Recall that $\II_T$ is the final service pattern output by $\AA_1$. Let $\II'_T$ denote the service pattern corresponding to $\AA_2$'s output. Note that $\II_T$ and $\II'_T$ are random variables, and since $\AA$'s output is same as $\AA_2$'s output, the cost of $\AA$'s output is $\text{cost}(\II'_T)$.

Since the output of $\AA_1$ is a sequence of extensions such that the service pattern remains feasible with respect to the request sequence at all times, the sequence $(\sigma_t,\ell_t)_{t=1,\ldots,T}$ is a valid instance of \ptwo\ (with probability one over the randomness of $\AA_1$). Since $\AA_2$ is $\alpha_2$-competitive, we have $\mathbb{E}[\text{cost}(\II'_T)\mid\II_T=\II]\leq\alpha_2\cdot\text{cost}(\II)$ for every service pattern $\II$ feasible with respect to $\rho$. This implies $\mathbb{E}[\text{cost}(\II'_T)]\leq\alpha_2\cdot\mathbb{E}[\text{cost}(\II_T)]$. Since $\AA_1$ is $\alpha_1$-competitive, $\mathbb{E}[\text{cost}(\II_T)]\leq\alpha_1\cdot\text{OPT}$. Thus, $\mathbb{E}[\text{cost}(\II'_T)]\leq\alpha_1\alpha_2\cdot\text{OPT}$. This implies that $\AA$ is $(\alpha_1\alpha_2)$-competitive.
\end{proof}

\section{Competing with a Revealed Service Pattern}\label{sec_wksha}

We organize this section as follows. In~\Cref{sec:struct}, we prove some structural results that are used in the definition and analysis of our algorithm. We define our algorithm formally in~\Cref{sec:algwksha} and analyze its competitive ratio in~\Cref{sec:comptv}. We use the following notation.
\begin{itemize}
\item $U$ denotes the set of points in a uniform metric space.
    \item For $t$ from $1$ to $T$, The $t$'th request is $\sigma_t\in U$, and $\rho_t=(\sigma_1,\ldots,\sigma_t)$ denotes the sequence of requests received until time $t$.
    \item The service pattern revealed by the adversary with the $t$'th request is denoted by $\II_t=(\II_t^1,\ldots,\II_t^k)$. Recall that $\II_t$ is the  $\ell_t$-extension of $\II_{t-1}$. Without loss of generality, we assume that the adversary moves all its servers with the first request, and therefore $\ell_1=k$.
    \item $L_t^{\ell}$ denotes the last interval in $\II_t^{\ell}$, that is, the unique interval in $\II_t^{\ell}$ that covers $[t,t+1)$.
    \item $s_t^{\ell}$ denotes the location of our algorithm's $\ell$'th server after processing the $t$'th request. Since our algorithm is randomized, $s_t^{\ell}$ is a random variable. Note that for the $t$'th request to be served, we must have $\sigma_t\in\{s_t^1,\ldots,s_t^k\}$ with probability one.
\end{itemize}

Consider the adversary's service pattern $\II_t=(\II_t^1,\ldots,\II_t^k)$. For an arbitrary $\ell$, fix the labels of the last intervals $L_t^{\ell+1},\ldots,L_t^k$ in the top $k-\ell$ levels $\II_t^{\ell+1},\ldots,\II_t^k$ of $\II_t$, and consider all feasible labelings of $\II_t$ with respect to $\rho$ that agree with the fixed labels. The set of labels that these labelings assign to the last interval $L^{\ell}_t$ of $\II^{\ell}_t$ will be crucial for our algorithm. We now define this set formally.

\begin{definition}\label{feasible-labels-set}
For any $t\in \{1,\ldots, T\}$, $\ell\in\{1,\ldots,k\}$, and $p^{\ell+1},\ldots,p^k\in U$, the set
$Q_t^{\ell}(p^{\ell+1},\ldots,p^k)$ is defined to be the set of points $p^{\ell}$ for which there exists a feasible labeling $\gamma$ of $\II_t$ with respect to $\rho_t$ such that $\gamma(L^i_t) = p^i$ for all $i\in\{\ell,\ldots,k\}$.
\end{definition}

\subsection{Structural Results}\label{sec:struct}

Bansal et al.~\cite{BansalEK_FOCS17} considered the following combinatorial question: given a service pattern $\II$ and a request sequence $\rho$, how many labels can an interval in the $k$'th level of $\II$ get, over all possible feasible labelings of $\II$ with respect to $\rho$? They derived the following interesting property.

\begin{fact}[Dichotomy Property~\cite{BansalEK_FOCS17}]\label{dichotomy-theorem}
There exists a sequence $n_1,n_2,\ldots$ of integers with $n_k\leq2^{2^{k+3\log k}}$ such that the following holds: for every $k$, every sequence of requests $\rho=(\sigma_1,\ldots,\sigma_T)$, every service pattern $\II=(\II^1,\ldots,\II^k)$ over $[1,T+1)$, and every $I\in \II^k$, the set $Q$ of labels of $I$ over all feasible labelings of $\II$ with respect to $\rho$ is the entire $U$, or it has size at most $n_k$.
\end{fact}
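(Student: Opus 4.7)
My plan is to prove this by induction on $k$, with $n_k$ defined as the smallest number for which the statement holds, deriving a recurrence $n_k \leq F(k, n_{k-1})$ that solves to $2^{2^{k+3\log k}}$. The base case $k=1$ is immediate: a one-level service pattern is just a partition of $[1,T+1)$, and an interval $I \in \II^1$ can be labeled feasibly by $p$ only if $\sigma_t = p$ for every $t \in I$; hence $|Q| = 1$ when $I$ contains any request and $Q = U$ otherwise, so $n_1 = 1$.

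For the inductive step, I would fix a top-level interval $I \in \II^k$ and let $I_1, \ldots, I_m$ be the level-$(k-1)$ sub-intervals partitioning $I$. A candidate $p$ for $I$ is feasible if and only if, for every $j$, the hierarchical sub-pattern induced on $I_j$ at levels $1, \ldots, k-1$ admits a feasible labeling of $\rho$ restricted to $I_j$ but with every request equal to $p$ regarded as already served by the top level. Letting $F_j \subseteq U$ be the set of such $p$'s, one has $Q = \bigcap_{j=1}^m F_j$. For each $I_j$ I would split into two cases: the \emph{easy} case, in which the $(k-1)$-level sub-pattern on $I_j$ is feasible against the full restricted request sequence so $F_j = U$, and the \emph{hard} case, in which help is required and I would invoke the inductive hypothesis inside $I_j$ to bound $|F_j|$ as a function of $n_{k-1}$. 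If every $F_j = U$ then $Q = U$; otherwise some $F_j$ is small and therefore so is $Q$.

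The main obstacle, as I see it, is pinning down the structure of $F_j$ tightly enough to close the recurrence. The statement ``the $(k-1)$-level sub-pattern with $p$-requests removed is feasible'' is not literally the dichotomy at level $k-1$, which talks about the label set of a top-level interval rather than mere feasibility, so some additional combinatorial invariant is required; one natural candidate is to track, alongside the label set, a small collection of ``pivot'' requests in $I_j$ whose removal can flip feasibility, and then relate this to the dichotomy applied to the level-$(k-1)$ interval $I_j$ under each candidate removal. A naive recurrence such as $n_k \leq n_{k-1}^m$ would be far too weak because $m$ can be arbitrarily large in terms of $k$, so the tight doubly-exponential bound would require exploiting the hierarchical refinement to argue that only $\mathrm{poly}(n_{k-1})$ of the $I_j$'s can contribute distinctly to $Q$, or to amortize the contributions of multiple hard $I_j$'s. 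Getting the exponent down to $k + 3\log k$ rather than a tower is the delicate part; I would expect this to demand a careful choice of invariant that is preserved precisely under the refinement operation.
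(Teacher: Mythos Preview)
The paper does not prove this statement. It is stated as a \texttt{fact} environment with an explicit citation to Bansal, Eli{\'a}{\v s}, and Koumoutsos~\cite{BansalEK_FOCS17}, and is used as a black box; the only thing the present paper proves about dichotomy is the \emph{Generalized Dichotomy Property} (\Cref{dichotomy-theorem-extn}), which reduces the level-$\ell$ statement to \Cref{dichotomy-theorem} by stripping the top $k-\ell$ levels and removing their labels from the request sequence. So there is nothing here to compare your attempt against.

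On the substance of your sketch: the high-level induction on $k$ via the decomposition $Q=\bigcap_j F_j$ is the right skeleton, and you have correctly identified the real difficulty. The condition ``$p\in F_j$'' is a \emph{feasibility} condition on the $(k-1)$-level sub-pattern over $I_j$ after deleting requests to $p$, not a statement about the label set of a single interval, so the inductive hypothesis as you have phrased it does not apply directly. Your proposal does not supply the missing invariant; you only conjecture that one exists and that it should yield a $\mathrm{poly}(n_{k-1})$-type recurrence. As written, then, this is a plan rather than a proof: the step that would actually bound $|F_j|$ (or the number of ``distinct'' $F_j$'s) in terms of $n_{k-1}$ is absent, and without it the recurrence does not close. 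If you want to carry this through, you should consult the original argument in~\cite{BansalEK_FOCS17}, where the stronger invariant needed for the induction is made explicit.
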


The next lemma generalizes the above result to intervals in every level.
\begin{lemma}[Generalized Dichotomy Property]\label{dichotomy-theorem-extn}
For every $t\in \{1,\ldots,T\}$, $\ell\in \{1,\ldots, k\}$, and $p^{\ell+1},\ldots,p^k\in U$, the set $Q^{\ell}_t(p^{\ell+1},\ldots,p^k)$ is the entire $U$, or it has size at most $n_{\ell}$, where $n_{\ell}\leq2^{2^{\ell+3\log \ell}}$ is the constant from \Cref{dichotomy-theorem}.
\end{lemma}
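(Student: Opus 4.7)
The plan is to reduce the claim to Fact~\ref{dichotomy-theorem} by constructing an $\ell$-level hierarchical service pattern instance, together with a suitable request sequence, over the interval $L^{\ell+1}_t$, and then applying the original Dichotomy Property at level $\ell$.

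First I would exploit the hierarchical structure: since $\II^i$ refines $\II^{i+1}$, the last intervals satisfy $L^\ell_t \subseteq L^{\ell+1}_t \subseteq \cdots \subseteq L^k_t$. Consequently, the fixed labels $p^{\ell+1},\ldots,p^k$ at the intervals $L^{\ell+1}_t,\ldots,L^k_t$ only influence feasibility at times inside $L^{\ell+1}_t$; outside of $L^{\ell+1}_t$, the intervals at levels $\ell+1,\ldots,k$ that contain a given time are \emph{not} the last intervals of their respective levels, so their labels are completely free. Because the adversary's service pattern was feasible to begin with, these free labels (together with the free labels at lower levels outside $L^{\ell+1}_t$) can always be chosen so as to extend any candidate labeling defined only on $L^{\ell+1}_t$ to a feasible labeling of $\II_t$ on $[1,t+1)$.

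Next, I would let $\II'$ be the $\ell$-level hierarchical service pattern on $L^{\ell+1}_t$ obtained by restricting $\II^1_t,\ldots,\II^\ell_t$ to $L^{\ell+1}_t$; by construction the last interval at its top level is exactly $L^\ell_t$. For every $s \in L^{\ell+1}_t$, feasibility of $\II_t$ with the fixed $p^{\ell+1},\ldots,p^k$ requires an interval at levels $1,\ldots,\ell$ containing $s$ to be labeled $\sigma_s$ \emph{exactly when} $\sigma_s \notin \{p^{\ell+1},\ldots,p^k\}$; otherwise the request at $s$ is automatically served by one of the fixed labels at a higher level and puts no constraint on $\II'$. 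Therefore $Q^\ell_t(p^{\ell+1},\ldots,p^k)$ coincides with the set of labels that $L^\ell_t$ can take across all labelings of $\II'$ satisfying the covering constraint at the ``non-pre-served'' times of $L^{\ell+1}_t$.

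The main obstacle is that Fact~\ref{dichotomy-theorem} is stated for request sequences in which \emph{every} time carries a covering constraint, whereas in our reduction the pre-served times carry no such constraint. I expect to handle this by a mild adaptation, via one of two routes. Route (i): observe that the combinatorial proof of Fact~\ref{dichotomy-theorem} uses only the ``specified requests must be covered'' condition, so it goes through essentially verbatim when a designated subset of times is declared free, giving the same bound $n_\ell$. Route (ii): encode each pre-served time $s$ by a request in $\{p^{\ell+1},\ldots,p^k\}$ together with an auxiliary argument showing that the set of labels for $L^\ell_t$ is preserved under this encoding. Either way, applying Fact~\ref{dichotomy-theorem} (or its free-times variant) to the $\ell$-level instance yields that the set of feasible labels for $L^\ell_t$ is either all of $U$ or of size at most $n_\ell$, which is precisely the claimed dichotomy.
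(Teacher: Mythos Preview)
Your overall strategy---restrict to an $\ell$-level instance, strip away the requests already served by the fixed labels, and invoke Fact~\ref{dichotomy-theorem}---is exactly what the paper does. Two remarks, one cosmetic and one a genuine gap.

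\textbf{Cosmetic.} The paper restricts to $L^{\ell}_t$ rather than $L^{\ell+1}_t$, so the top level of the reduced pattern $\JJ$ consists of the single interval $L^{\ell}_t$. It then simply \emph{deletes} all requests to $p^{\ell+1},\ldots,p^k$ to form a subsequence $\rho''$ and applies Fact~\ref{dichotomy-theorem} to $(\JJ,\rho'')$. This is your route~(i) made concrete; no ``free-times variant'' needs to be invoked separately.

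\textbf{The gap.} Your extension step is not correct as written. You assert that ``outside of $L^{\ell+1}_t$, the intervals at levels $\ell+1,\ldots,k$ that contain a given time are not the last intervals of their respective levels, so their labels are completely free.'' This is false: since $L^{\ell+1}_t\subseteq L^{\ell+2}_t\subseteq\cdots\subseteq L^k_t$, a time $s\in L^{\ell+2}_t\setminus L^{\ell+1}_t$ lies in the last interval $L^{\ell+2}_t$ at level $\ell+2$, whose label is pinned to $p^{\ell+2}$, not free. Consequently, mere feasibility of $\II_t$ (which gives some labeling, but one that need not satisfy $\gamma(L^i_t)=p^i$) does not let you extend an arbitrary labeling of $\II'$ to a feasible labeling of $\II_t$ respecting the pins $p^{\ell+1},\ldots,p^k$. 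Indeed one can build instances where $Q^{\ell}_t(p^{\ell+1},\ldots,p^k)=\emptyset$ while the reduced set $Q$ for $\II'$ is nonempty, so your claimed equality $Q^{\ell}_t=Q$ fails outright. The paper's fix is exactly what you are missing: first dispose of the trivial case $Q^{\ell}_t(p^{\ell+1},\ldots,p^k)=\emptyset$; otherwise take a witness labeling $\gamma$ realizing the pins and \emph{overwrite} its restriction to $\JJ$ by the reduced labeling $\gamma'$. The overwritten labeling then serves requests before $L^\ell_t$ via $\gamma$, requests in $L^\ell_t$ to points in $\{p^{\ell+1},\ldots,p^k\}$ via the pinned higher levels, and the remaining requests via $\gamma'$. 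With this correction your argument goes through.
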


\begin{proof}
The lemma holds trivially when $Q^{\ell}_t(p^{\ell+1},\ldots,p^k) =\emptyset$, so assume $Q^{\ell}_t(p^{\ell+1},\ldots,p^k)\neq\emptyset$. Let $\rho'$ denote the request sequence during the interval $L^{\ell}_t$ and $\JJ$ denote the restriction of the service pattern $\II_t$ to the interval $L^{\ell}_t$ and levels $1,\ldots,\ell$. Let $\rho''$ denote the subsequence of $\rho'$ formed by removing all the requests to $p^{\ell+1},\ldots,p^k$. Let $Q$ denote the set of labels of the interval $L^{\ell}_t$ over all the feasible labelings of the $\ell$-level service pattern $\JJ$ with respect to $\rho''$. From \Cref{dichotomy-theorem} we get that the set $Q$ is either $U$ or has size at most $n_{\ell}$. We argue that $Q^{\ell}_t(p^{\ell+1},\ldots,p^k) = Q$, and this implies the claim. 
%\Cref{fig:extn} in the appendix gives
Refer to \Cref{fig:extn} for a working illustration on an instance with $k=5$ and $\ell=3$.

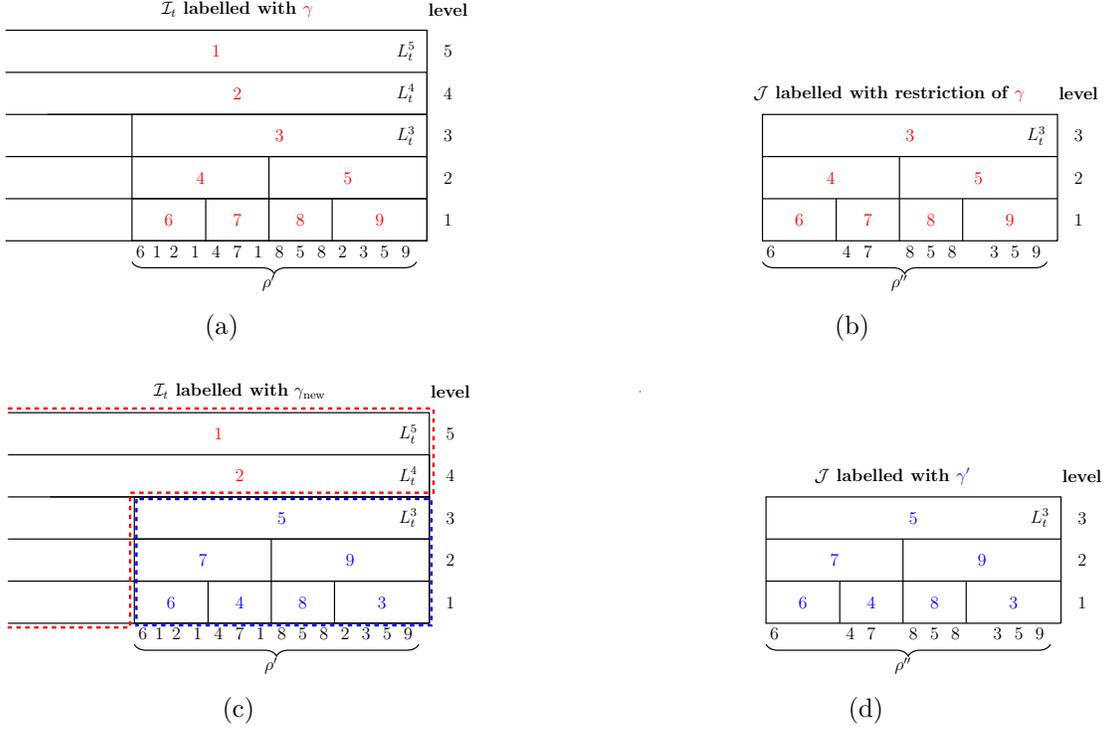
\begin{figure}[h]
	%\centering
	\captionsetup[subfigure]{justification=centering}
	\hspace{-10pt}
	\begin{subfigure}[b]{0.5\textwidth}
		\centering
		\resizebox{!}{4cm}{
		\begin{tikzpicture}	
            \draw(-.7,-1);%lol i hate tikz, but i love it
	        % top 2 intervals + whole body
	    	\draw[thick](0,0)--(10,0)--(10,5)--(0,5);
            \draw[thick](0,4)--(10,4)--(10,3)--(1,3);%--(1,4);
	   % 	\draw (1, 5.5) node {\large $\mathbf{k=5,\ell=3}$};
	    	
	    	\draw (5.5, 5.5) node {\large \textbf{$\II_t$ labelled with \textcolor{red}{$\mathbf{\gamma}$}}};
	    	%% levels
	    	\draw[thick](0,3)--(10,3);
	    	\draw[thick](0,2)--(10,2);
	    	\draw[thick](0,1)--(10,1);
	    	
	    	\draw (10.5, 5.5) node {\large \textbf{level}};
			\draw (10.5, 4.5) node {\large \textbf{$5$}};
			\draw (10.5, 3.5) node {\large \textbf{$4$}};
			\draw (10.5, 2.5) node {\large \textbf{$3$}};
			\draw (10.5, 1.5) node {\large \textbf{$2$}};
			\draw (10.5, .5) node {\large \textbf{$1$}};

			% \JJ
			%% outside boundary and levels
			\draw[thick](10,0)--(10,3)--(3,3)--(3,0)--(10,0);
			\draw[thick](10,2)--(3,2);
			\draw[thick](10,1)--(3,1);
			
			%% intervals
			\draw[thick](4.75,0)--(4.75,1);
			\draw[thick](7.75,0)--(7.75,1);
			
			\draw[thick](6.25,0)--(6.25,2);
			
			%sequence labels
			\draw[thick,decorate,decoration={brace,amplitude=8pt}] (9.75,-.5) -- (3,-.5);
			\draw (6.25,-1) node {\large \textbf{$\rho'$}}; 
			
			%L_t^\ell nodes
			\draw (9.5,4.5) node {\large \textbf{$L_t^5$}}; 
			\draw (9.5,3.5) node {\large \textbf{$L_t^4$}}; 
			\draw (9.5,2.5) node {\large \textbf{$L_t^3$}}; 
			
			%L_t^\ell labels
			\draw[red] (5,4.5) node {\large \textbf{$1$}}; 
			\draw[red] (5.5,3.5) node {\large \textbf{$2$}}; 
			\draw[red] (6.5,2.5) node {\large \textbf{$3$}}; 
			
			\draw[red] (4.625,1.5) node {\large \textbf{$4$}}; 
			\draw[red] (8.125,1.5) node {\large \textbf{$5$}}; 
			
			\draw[red] (3.875,.5) node {\large \textbf{$6$}}; 
			\draw[red] (5.5,.5) node {\large \textbf{$7$}}; 
			\draw[red] (7,.5) node {\large \textbf{$8$}}; 
			\draw[red] (8.875,.5) node {\large \textbf{$9$}}; 
			
			%requests
			\draw (3.2,-.25) node {\large \textbf{$6$}};
			\draw (3.6,-.25) node {\large \textbf{$1$}};
			\draw (4,-.25) node {\large \textbf{$2$}};
			\draw (4.5,-.25) node {\large \textbf{$1$}};
			
			\draw (5,-.25) node {\large \textbf{$4$}};
			\draw (5.5,-.25) node {\large \textbf{$7$}};
			\draw (6,-.25) node {\large \textbf{$1$}};
			
			\draw (6.5,-.25) node {\large \textbf{$8$}};
			\draw (7,-.25) node {\large \textbf{$5$}};
			\draw (7.5,-.25) node {\large \textbf{$8$}};
			
			\draw (8,-.25) node {\large \textbf{$2$}};
			\draw (8.5,-.25) node {\large \textbf{$3$}};
			\draw (9,-.25) node {\large \textbf{$5$}};
			\draw (9.5,-.25) node {\large \textbf{$9$}};
			
		\end{tikzpicture}}
		\caption{}
		\label{fig:extn1}
	\end{subfigure}
	%subfig 2
	\begin{subfigure}[b]{0.5\textwidth}
		\centering
		\resizebox{!}{4cm}{
		\begin{tikzpicture}		
            \draw(-.7,-1);%lol i hate tikz, but i love it
	        % top 2 intervals + whole body
	        \draw[white] (1, 5.5) node {\large $\mathbf{k=5,\ell=3}$};
	    	\draw (6, 3.5) node {\large \textbf{$\JJ$ labelled with restriction of \textcolor{red}{$\mathbf{\gamma}$}}};
	    	
	    	\draw (10.5, 3.5) node {\large \textbf{level}};
			\draw (10.5, 2.5) node {\large \textbf{$3$}};
			\draw (10.5, 1.5) node {\large \textbf{$2$}};
			\draw (10.5, .5) node {\large \textbf{$1$}};

			% \JJ
			%% outside boundary and levels
			\draw[thick](10,0)--(10,3)--(3,3)--(3,0)--(10,0);
			\draw[thick](10,2)--(3,2);
			\draw[thick](10,1)--(3,1);
			
			%% intervals
			\draw[thick](4.75,0)--(4.75,1);
			\draw[thick](7.75,0)--(7.75,1);
			
			\draw[thick](6.25,0)--(6.25,2);
			
			%sequence labels
			\draw[thick,decorate,decoration={brace,amplitude=8pt}] (9.75,-.5) -- (3,-.5);
			\draw (6.25,-1) node {\large \textbf{$\rho''$}}; 
			
			%L_t^\ell nodes
			\draw (9.5,2.5) node {\large \textbf{$L_t^3$}}; 
			
			%L_t^\ell labels 
			\draw[red] (6.5,2.5) node {\large \textbf{$3$}}; 
			
			\draw[red] (4.625,1.5) node {\large \textbf{$4$}}; 
			\draw[red] (8.125,1.5) node {\large \textbf{$5$}}; 
			
			\draw[red] (3.875,.5) node {\large \textbf{$6$}}; 
			\draw[red] (5.5,.5) node {\large \textbf{$7$}}; 
			\draw[red] (7,.5) node {\large \textbf{$8$}}; 
			\draw[red] (8.875,.5) node {\large \textbf{$9$}}; 
			
			%requests
			\draw (3.2,-.25) node {\large \textbf{$6$}};
			
			\draw (5,-.25) node {\large \textbf{$4$}};
			\draw (5.5,-.25) node {\large \textbf{$7$}};
			
			\draw (6.5,-.25) node {\large \textbf{$8$}};
			\draw (7,-.25) node {\large \textbf{$5$}};
			\draw (7.5,-.25) node {\large \textbf{$8$}};
	
			\draw (8.5,-.25) node {\large \textbf{$3$}};
			\draw (9,-.25) node {\large \textbf{$5$}};
			\draw (9.5,-.25) node {\large \textbf{$9$}};
			
		\end{tikzpicture}}
		\caption{}
		\label{fig:extn2}
	\end{subfigure}
    \par\bigskip
	%subfig 3
	\begin{subfigure}[b]{0.5\textwidth}
		\centering
		\resizebox{!}{4cm}{
		\begin{tikzpicture}	
            
		    % top 2 intervals + whole body
	        \draw[thick](0,0)--(10,0)--(10,5)--(0,5);
            \draw[thick](0,4)--(10,4)--(10,3)--(1,3);%--(1,4);0)--(10,5)--(0,5)--(0,4)--(10,4)--(10,3)--(1,3)--(1,4);
	   % 	\draw (1, 5.5) node {\large $\mathbf{k=5,\ell=3}$};
	    	
	    	\draw (5.5, 5.5) node {\large \textbf{$\II_t$ labelled with} $\gamma_{\text{new}}$};
	    	%% levels
	    	\draw[thick](0,3)--(10,3);
	    	\draw[thick](0,2)--(10,2);
	    	\draw[thick](0,1)--(10,1);
	    	
	    	\draw (10.5, 5.5) node {\large \textbf{level}};
			\draw (10.5, 4.5) node {\large \textbf{$5$}};
			\draw (10.5, 3.5) node {\large \textbf{$4$}};
			\draw (10.5, 2.5) node {\large \textbf{$3$}};
			\draw (10.5, 1.5) node {\large \textbf{$2$}};
			\draw (10.5, .5) node {\large \textbf{$1$}};

			% \JJ
			%% outside boundary and levels
			\draw[thick](10,0)--(10,3)--(3,3)--(3,0)--(10,0);
			\draw[thick](10,2)--(3,2);
			\draw[thick](10,1)--(3,1);
			
			%% intervals
			\draw[thick](4.75,0)--(4.75,1);
			\draw[thick](7.75,0)--(7.75,1);
			
			\draw[thick](6.25,0)--(6.25,2);
			
			%sequence labels
			\draw[thick,decorate,decoration={brace,amplitude=8pt}] (9.75,-.5) -- (3,-.5);
			\draw (6.25,-1) node {\large \textbf{$\rho'$}}; 
			
			%L_t^\ell nodes
			\draw (9.5,4.5) node {\large \textbf{$L_t^5$}}; 
			\draw (9.5,3.5) node {\large \textbf{$L_t^4$}}; 
			\draw (9.5,2.5) node {\large \textbf{$L_t^3$}}; 
			
			%L_t^\ell labels
			\draw[red] (5,4.5) node {\large \textbf{$1$}}; 
			\draw[red] (5.5,3.5) node {\large \textbf{$2$}}; 
			\draw[blue] (6.5,2.5) node {\large \textbf{$5$}}; 
			
			\draw[blue] (4.625,1.5) node {\large \textbf{$7$}}; 
			\draw[blue] (8.125,1.5) node {\large \textbf{$9$}}; 
			
			\draw[blue] (3.875,.5) node {\large \textbf{$6$}}; 
			\draw[blue] (5.5,.5) node {\large \textbf{$4$}}; 
			\draw[blue] (7,.5) node {\large \textbf{$8$}}; 
			\draw[blue] (8.875,.5) node {\large \textbf{$3$}}; 
			
			%requests
			\draw (3.2,-.25) node {\large \textbf{$6$}};
			\draw (3.6,-.25) node {\large \textbf{$1$}};
			\draw (4,-.25) node {\large \textbf{$2$}};
			\draw (4.5,-.25) node {\large \textbf{$1$}};
			
			\draw (5,-.25) node {\large \textbf{$4$}};
			\draw (5.5,-.25) node {\large \textbf{$7$}};
			\draw (6,-.25) node {\large \textbf{$1$}};
			
			\draw (6.5,-.25) node {\large \textbf{$8$}};
			\draw (7,-.25) node {\large \textbf{$5$}};
			\draw (7.5,-.25) node {\large \textbf{$8$}};
			
			\draw (8,-.25) node {\large \textbf{$2$}};
			\draw (8.5,-.25) node {\large \textbf{$3$}};
			\draw (9,-.25) node {\large \textbf{$5$}};
			\draw (9.5,-.25) node {\large \textbf{$9$}};
			
			% dashed colored labelings
		    \draw[ultra thick, dashed, red](0,5.1)--(10.1,5.1)--(10.1,3.1)--(2.9,3.1)--(2.9,-.1)--(0,-.1);
		    \draw[ultra thick, dashed, blue](3.05,2.95)--(3.05,-.05)--(10.05,-.05)--(10.05,2.95)--(3.05,2.95);

		\end{tikzpicture}}
		\caption{}
		\label{fig:extn3}
	\end{subfigure}
	%subfig 4
	\begin{subfigure}[b]{0.5\textwidth}
		\centering
		\resizebox{!}{4cm}{
		\begin{tikzpicture}			
	        % top 2 intervals + whole body
	        \draw[white] (1, 5.5) node {\large $\mathbf{k=5,\ell=3}$};
	        \draw(0,5.5)--(0,5.5);
	    	\draw (6, 3.5) node {\large \textbf{$\JJ$ labelled with \textcolor{blue}{$\mathbf{\gamma'}$}}};
	    	
	    	\draw (10.5, 3.5) node {\large \textbf{level}};
			\draw (10.5, 2.5) node {\large \textbf{$3$}};
			\draw (10.5, 1.5) node {\large \textbf{$2$}};
			\draw (10.5, .5) node {\large \textbf{$1$}};

			% \JJ
			%% outside boundary and levels
			\draw[thick](10,0)--(10,3)--(3,3)--(3,0)--(10,0);
			\draw[thick](10,2)--(3,2);
			\draw[thick](10,1)--(3,1);
			
			%% intervals
			\draw[thick](4.75,0)--(4.75,1);
			\draw[thick](7.75,0)--(7.75,1);
			
			\draw[thick](6.25,0)--(6.25,2);
			
			%sequence labels
			\draw[thick,decorate,decoration={brace,amplitude=8pt}] (9.75,-.5) -- (3,-.5);
			\draw (6.25,-1) node {\large \textbf{$\rho''$}}; 
			
			%L_t^\ell nodes
			\draw (9.5,2.5) node {\large \textbf{$L_t^3$}}; 
			
			%L_t^\ell labels
			\draw[blue] (6.5,2.5) node {\large \textbf{$5$}}; 
			
			\draw[blue] (4.625,1.5) node {\large \textbf{$7$}}; 
			\draw[blue] (8.125,1.5) node {\large \textbf{$9$}}; 
			
			\draw[blue] (3.875,.5) node {\large \textbf{$6$}}; 
			\draw[blue] (5.5,.5) node {\large \textbf{$4$}}; 
			\draw[blue] (7,.5) node {\large \textbf{$8$}}; 
			\draw[blue] (8.875,.5) node {\large \textbf{$3$}}; 
			
			%requests
			\draw (3.2,-.25) node {\large \textbf{$6$}};
			
			\draw (5,-.25) node {\large \textbf{$4$}};
			\draw (5.5,-.25) node {\large \textbf{$7$}};
			
			\draw (6.5,-.25) node {\large \textbf{$8$}};
			\draw (7,-.25) node {\large \textbf{$5$}};
			\draw (7.5,-.25) node {\large \textbf{$8$}};
	
			\draw (8.5,-.25) node {\large \textbf{$3$}};
			\draw (9,-.25) node {\large \textbf{$5$}};
			\draw (9.5,-.25) node {\large \textbf{$9$}};
			
		\end{tikzpicture}}
		\caption{}
		\label{fig:extn4}
	\end{subfigure}
	\caption{An illustration of \Cref{dichotomy-theorem-extn} for $k=5$ and $\ell=3$. (a) Depicts $\II_t$ with a labeling  $\gamma$ (colored in red) feasible with respect to $\rho_t$ such that $\gamma(L^5_t)=1$ and $\gamma(L^4_t)=2$. (b) Depicts the $3$-level service pattern $\JJ$ labeled with the restriction of $\gamma$, along with $\rho''$, the subsequence of $\rho'$ formed by removing all the requests to points $1$ and $2$. (d) Depicts the labeling $\gamma'$ of $\JJ$ feasible with respect to $\rho''$. (c) Shows the new labeling $\gamma_{\text{new}}$ constructed by overwriting $\gamma'$ onto $\gamma$ for every interval in $\JJ$.}
	\label{fig:extn}
\end{figure}

For any $p^{\ell}\in Q^{\ell}_t(p^{\ell+1},\ldots,p^k)$, from \Cref{feasible-labels-set} we get that there exists a feasible labeling $\gamma$ of $\II_t$ with respect to $\rho_t$ such that $\gamma(L^i_t)=p^i$ for all $i\in\{\ell,\ldots,k\}$. In the labeling $\gamma$, the servers $\ell+1,\ldots,k$ can only serve the requests to the points $p^{\ell+1},\ldots,p^k$ during the interval $L_t^{\ell}$. This implies that all the requests in $\rho''$ must be served by servers $1,\ldots,\ell$. Thus, the restriction $\gamma'$ of $\gamma$ to $\JJ$ is feasible with respect to $\rho''$. But $\gamma'(L_t^{\ell})=\gamma(L_t^{\ell})=p^{\ell}$. Therefore, $p^{\ell}\in Q$. Thus, $Q^{\ell}_t(p^{\ell+1},\ldots,p^k)\subseteq Q$.

%By restricting $\gamma$ to the service pattern $\JJ$ we get a labeling $\gamma'$ of $\JJ$ such that $\gamma'(I) = \gamma(I)$ for any interval $I$ in $\JJ$.  Hence, $\gamma'$ is a feasible labeling of $\JJ$ with respect to $\rho''$ and $\gamma' = p^{\ell}\in Q$. This implies $Q^{\ell}_t(p^{\ell+1},\ldots,p^k)\subseteq Q$. 

Now consider any point $p^{\ell}\in Q$, and let $\gamma'$ be a feasible labeling of $\JJ$ with respect to $\rho''$ such that $\gamma'(L_t^{\ell})=p^{\ell}$. Suppose $\gamma$ is a feasible labeling of $\II_t$ with respect to $\rho_t$ such that $\gamma(L^i_t) = p^i$ for all $i\in\{\ell+1,\ldots,k\}$ (such a labeling exists because we assumed $Q^{\ell}_t(p^{\ell+1},\ldots,p^k)\neq\emptyset$). Overwrite the labeling $\gamma'$ onto $\gamma$ to get a new labeling $\gamma_{\text{new}}$. Formally, $\gamma_{\text{new}}(I) = \gamma'(I)$ if interval $I$ is in $\JJ$, else $\gamma_{\text{new}}(I)=\gamma(I)$. We claim that $\gamma_{\text{new}}$ is also a feasible labeling of $\II_t$ with respect to $\rho_t$. This can be argued as follows.

The labeling $\gamma_{\text{new}}$ serves all the requests in $\rho''$ because it agrees with $\gamma'$ in the service pattern $\JJ$. $\gamma_{\text{new}}$ serves all the requests during the interval $L_t^{\ell}$ other than those in $\rho''$ because all these requests are made at points in $\{p^{\ell+1},\ldots,p_k\}$, and $\gamma_{\text{new}}(L_t^i)=\gamma(L_t^i)=p^i$ for all $i\in\{\ell+1,\ldots,k\}$. Finally, $\gamma_{\text{new}}$ serves all the requests before the interval $L_t^{\ell}$ because it agrees with $\gamma$ before the interval $L_t^{\ell}$.

Thus, $\gamma_{\text{new}}$ is a feasible labeling of $\II_t$ with respect to $\rho_t$ such that $\gamma_{\text{new}}(L^i_t) = p^i$ for all $i\in\{\ell+1,\ldots,k\}$. From \Cref{feasible-labels-set}, we get that $\gamma_{\text{new}}(L_t^{\ell})=p^{\ell}\in Q^{\ell}_t(p^{\ell+1},\ldots,p^k)$. This implies $Q\subseteq Q^{\ell}_t(p^{\ell+1},\ldots,p^k)$.
\end{proof}

We now state a useful consequence of the simple fact that the restriction of a solution for the first $t$ requests to the first $t-1$ requests is a solution for the first $t-1$ requests.
\begin{lemma}\label{subset-feasible-labels}
For every $t\in \{2,\ldots,T\}$, $\ell\in \{\ell_t+1,\ldots, k\}$, and $p^{\ell+1},\ldots,p^k\in U$, $Q^{\ell}_t(p^{\ell+1},\ldots,p^k)\subseteq Q^{\ell}_{t-1}(p^{\ell+1},\ldots,p^k)$.
\end{lemma}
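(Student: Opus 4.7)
}

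The plan is to take a witness feasible labeling of $\II_t$ and simply restrict it to $\II_{t-1}$. The key observation that makes this work is the hypothesis $\ell > \ell_t$: in the $\ell_t$-extension from $\II_{t-1}$ to $\II_t$, every level $i > \ell_t$ is ``continued'' rather than ``started fresh'', meaning $L^i_t = L^i_{t-1}\cup[t,t+1)$ and all other intervals of $\II^i_{t-1}$ also appear as intervals of $\II^i_t$. Thus the information about the label that $L^i_{t-1}$ would need is already encoded in $\gamma(L^i_t)$ for any feasible $\gamma$ of $\II_t$.

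Concretely, I would start by fixing $p^\ell\in Q^\ell_t(p^{\ell+1},\ldots,p^k)$ and letting $\gamma$ be a feasible labeling of $\II_t$ with respect to $\rho_t$ satisfying $\gamma(L^i_t)=p^i$ for all $i\in\{\ell,\ldots,k\}$. Then I would define a labeling $\gamma'$ of $\II_{t-1}$ by setting, for each level $i$ and each interval $I\in\II^i_{t-1}$,
\[
\gamma'(I) \;=\; \begin{cases} \gamma(L^i_t) & \text{if } i>\ell_t \text{ and } I=L^i_{t-1},\\ \gamma(I) & \text{otherwise.}\end{cases}
\]
(The ``otherwise'' case applies precisely when $I$ is already an interval of $\II^i_t$.) By construction, for every $i\in\{\ell,\ldots,k\}$ we have $i>\ell_t$, so $\gamma'(L^i_{t-1})=\gamma(L^i_t)=p^i$, which is exactly the boundary condition required to conclude $p^\ell\in Q^\ell_{t-1}(p^{\ell+1},\ldots,p^k)$.

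What remains is to check that $\gamma'$ is feasible with respect to $\rho_{t-1}$. Fix any $s\le t-1$. Since $\gamma$ is feasible for $\rho_t$, there exist a level $i$ and an interval $I\in\II^i_t$ with $s\in I$ and $\gamma(I)=\sigma_s$. Here I would split into cases: if $i\le\ell_t$, then $L^i_t=[t,t+1)$ cannot contain $s$, so $I$ is one of the intervals common to $\II^i_{t-1}$ and $\II^i_t$ and $\gamma'(I)=\gamma(I)=\sigma_s$; if $i>\ell_t$ and $I\ne L^i_t$, the same argument applies; finally if $i>\ell_t$ and $I=L^i_t$, then $s\in L^i_t\setminus[t,t+1)=L^i_{t-1}$, and by the definition of $\gamma'$ we have $\gamma'(L^i_{t-1})=\gamma(L^i_t)=\sigma_s$. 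In every case $\sigma_s$ is realized as a label of some interval of $\II_{t-1}$ covering $s$, so $\gamma'$ is feasible.

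The proof is essentially a bookkeeping exercise; the only thing to be careful about is the case split above and the use of $\ell>\ell_t$ to ensure that $L^\ell_{t-1}$ really exists and is a subinterval of $L^\ell_t$. I do not anticipate a genuine obstacle beyond making this split explicit.
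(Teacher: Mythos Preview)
Your proposal is correct and follows essentially the same approach as the paper: take a witness feasible labeling of $\II_t$, restrict it to $\II_{t-1}$ in the obvious way, and observe that feasibility for $\rho_{t-1}$ is inherited. Your version is in fact more careful than the paper's, which defines $\gamma'$ only on $L^i_{t-1}$ for $i\ge\ell$ and asserts the remaining intervals are ``also intervals of $\II_t$'' (glossing over levels $\ell_t<i<\ell$), and which dismisses the feasibility check as ``easy''; your explicit case split on whether $i\le\ell_t$ and whether $I=L^i_t$ fills in exactly what the paper omits.
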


\begin{proof}
Recall that $\II_t=(\II_t^1,\ldots,\II_t^k)$ is the $\ell_t$-extension of $\II_{t-1}=(\II_{t-1}^1,\ldots,\II_{t-1}^k)$, which means $L_{t-1}^{\ell}=L_t^{\ell}\setminus[t,t+1)\neq\emptyset$. 
By \Cref{feasible-labels-set}, if some point $p^{\ell}$ is in $Q^{\ell}_t(p^{\ell+1},\ldots,p^k)$, then there exists a feasible labeling $\gamma$ of $\II_t$ with respect to $\rho_t$ such that $\gamma(L_t^i)=p^i$ for all $i\in\{\ell,\ldots,t\}$. Restrict $\gamma$ to obtain a labeling $\gamma'$ of $\II_{t-1}$ in the obvious manner: $\gamma'(L_{t-1}^i)=\gamma(L_t^i)=p^i$ for all $i\in\{\ell,\ldots,t\}$ and $\gamma'(I)=\gamma(I)$ for all other intervals $I$ of $\II_{t-1}$ (which are also intervals of $\II_t$). It is easy to check that $\gamma'$ is a feasible labeling of $\II_{t-1}$ with respect to $\rho_{t-1}$. Thus, from \Cref{feasible-labels-set} we get, $p^{\ell}\in Q^{\ell}_{t-1}(p^{\ell+1},\ldots,p^k)$.
\end{proof}

\subsection{Algorithm} \label{sec:algwksha}

Before stating our \ptwo\ algorithm formally, we give some intuitive explanation. Recall that $s_t^{\ell}$ denotes the location of the algorithm's $\ell$'th server after serving the $t$'th request. The critical invariant maintained by our algorithm is the following.
\begin{invariant}\label{inv:random-point}
    For every $t$ and $\ell$, in response to the $t$'th request, the algorithm keeps its $\ell$'th server at a uniformly random point in $Q_t^{\ell}(s_t^{\ell+1},\ldots,s_t^k)$.
\end{invariant} 
\Cref{random-point} states this claim formally. For now, let us just understand the scenarios in which the algorithm needs to move the $\ell$'th server at time $t$ so that it occupies \emph{some} point in $Q_t^{\ell}(s_t^{\ell+1},\ldots,s_t^k)$. These scenarios are as follows.
\begin{enumerate}
    \item The algorithm moves the $\ell'$'th server for some $\ell'>\ell$. This means that, potentially, $s_t^{\ell'}\neq s_{t-1}^{\ell'}$, so $Q_t^{\ell}(s_t^{\ell+1},\ldots,s_t^k)$ could be different from $Q_{t-1}^{\ell}(s_{t-1}^{\ell+1},\ldots,s_{t-1}^k)$.
    \item $\ell_t\geq\ell$. This means that $L_t^{\ell}=[t,t+1)$ is disjoint from $L_{t-1}^{\ell}$, and again, potentially, $Q_t^{\ell}(s_t^{\ell+1},\ldots,s_t^k)$ could be different from $Q_{t-1}^{\ell}(s_{t-1}^{\ell+1},\ldots,s_{t-1}^k)$.
    \item None of the above happens, so by \Cref{subset-feasible-labels}, $Q_t^{\ell}(s_t^{\ell+1},\ldots,s_t^k)\subseteq Q_{t-1}^{\ell}(s_{t-1}^{\ell+1},\ldots,s_{t-1}^k)$. However, $s_{t-1}^{\ell}\notin Q_t^{\ell}(s_t^{\ell+1},\ldots,s_t^k)$ (that is, $s_{t-1}^{\ell}\in Q_{t-1}^{\ell}(s_{t-1}^{\ell+1},\ldots,s_{t-1}^k)\setminus Q_t^{\ell}(s_t^{\ell+1},\ldots,s_t^k)$), so the $\ell$'th server can no longer remain at the same place $s_{t-1}^{\ell}$ as before.
\end{enumerate}
We call the movement in the first two scenarios a \textit{forced movement} (because the movement of some other server forced this movement), and the movement in the third scenario an \textit{unforced movement}. If none of the above scenarios arises, then the $\ell$'th server stays put. \Cref{alg:p2} is the formal description of our algorithm for \ptwo.

\begin{algorithm}
\caption{\ptwo}\label{alg:p2}
    \begin{algorithmic}[1]
        \FOR{$t=1$ to $T$}
            \STATE \textbf{Input:} request $\sigma_t\in U$, and $\ell_t\in\{0,\ldots,k\}$.
            \STATE \COMMENT{Recall: $\II_t$ is the $\ell_t$-extension of $\II_{t-1}$.}
            \STATE \texttt{flag} $\gets$ \texttt{FALSE}
            \FOR{$\ell=k$ to $1$}
                \STATE \COMMENT{Decide movements of servers in decreasing order of weight.}
                \STATE \COMMENT{\texttt{flag} $=$ \texttt{TRUE} indicates that an unforced movement of some server heavier than the $\ell$'th has happened.}
                \STATE Compute $Q^{\ell}_t(s_t^{\ell+1},\ldots,s_t^k)$ (by brute force).
                \IF{\texttt{flag} \texttt{OR} $\ell\leq\ell_t$} \label{alg:if1}
                    \STATE $s_t^{\ell}$ $\gets$ a uniformly random point in $Q^{\ell}_t(s_t^{\ell+1},\ldots,s_t^k)$. \COMMENT{forced movement} \label{alg:mov1}
%                    \STATE \COMMENT{Send the $\ell$'th server to a uniformly random point in $Q^{\ell}_t(s_t^{\ell+1},\ldots,s_t^k)$.}
                \ELSIF{$s^{\ell}_{t-1}\notin Q^{\ell}_t(s_t^{\ell+1},\ldots,s_t^k)$} \label{alg:if2}
%                    \STATE \COMMENT{The $\ell$'th server is located outside $Q^{\ell}_t(s_t^{\ell+1},\ldots,s_t^k)$.}
                    \STATE $s_t^{\ell}$ $\gets$ a uniformly random point in $Q^{\ell}_t(s_t^{\ell+1},\ldots,s_t^k)$. \COMMENT{unforced movement} \label{alg:mov2}
%                    \STATE \COMMENT{Send the $\ell$'th server to a uniformly random point in $Q^{\ell}_t(s_t^{\ell+1},\ldots,s_t^k)$.}
                    \STATE \texttt{flag} $\gets$ \texttt{TRUE}.
                \ELSE
                    \STATE $s^{\ell}_t$ $\gets$ $s^{\ell}_{t-1}$. \COMMENT{no movement} \label{alg:no-movement}
%                    \STATE \COMMENT{Don't move the $\ell$'th server.}
                \ENDIF
            \ENDFOR
        \ENDFOR
    \end{algorithmic}
\end{algorithm}
%\aayr{do we need line 3 in algorithm now?}

Note that it is unclear so far why \Cref{alg:p2} is well-defined -- why the set $Q^{\ell}_t(s_t^{\ell+1},\ldots,s_t^k)$ is nonempty when we attempt to send the $\ell$'th server to a uniformly random point in it in steps~\ref{alg:mov1} and~\ref{alg:mov2} -- and why every request gets served. We provide an answer now.

%We next prove the claim which we promised earlier -- that the algorithm is well-defined and it indeed serves every request.

\begin{lemma}\label{lem:nonempty}
For every $t\in \{1,\ldots,T\}$ the following statements hold with probability one.
\begin{enumerate}
\item For every $\ell\in \{0,\ldots,k\}$, there exists a feasible labeling $\gamma$ of $\II_t$ with respect to $\rho_t$ such that $\gamma(L^i_t)=s^i_t$ for all $i\in\{\ell+1,\ldots,k\}$.
\item For every $\ell\in \{1,\ldots,k\}$, the set $Q^{\ell}_t(s^{\ell+1}_t,\ldots,s^k_t)$ is non-empty.
\item \Cref{alg:p2} serves the $t$'th request.
\end{enumerate}
\end{lemma}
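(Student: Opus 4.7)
The plan is to prove all three statements simultaneously by an outer induction on $t$, and within each $t$ by a nested downward induction on $\ell$ going from $k$ to $0$. The outer base $t=1$ and the inductive step use the same template: assume the lemma holds at all earlier times (vacuously at $t=1$, where $\ell_1=k$ forces every server to move anyway), and establish the three statements at time $t$.

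Inside time $t$, statement~1 at $\ell=k$ is immediate because the adversary's revealed pattern $\II_t$ is feasible with respect to $\rho_t$, so \emph{some} feasible labeling exists (with no constraints imposed, since $\{\ell+1,\ldots,k\}$ is empty). For the downward inductive step, suppose statement~1 holds at some $\ell\in\{1,\ldots,k\}$ with witness labeling $\gamma$; then $\gamma(L^\ell_t)\in Q^\ell_t(s^{\ell+1}_t,\ldots,s^k_t)$ by \Cref{feasible-labels-set}, which proves statement~2 at $\ell$ and guarantees the algorithm's sampling in steps~\ref{alg:mov1} and~\ref{alg:mov2} is well defined. I then check that in every branch of the inner loop, $s^\ell_t\in Q^\ell_t(s^{\ell+1}_t,\ldots,s^k_t)$ with probability one: in the forced and unforced branches, $s^\ell_t$ is drawn uniformly from this non-empty set; in the no-movement branch, the failure of the test in line~\ref{alg:if2} is exactly the statement $s^\ell_{t-1}\in Q^\ell_t(s^{\ell+1}_t,\ldots,s^k_t)$, and $s^\ell_t$ is set equal to $s^\ell_{t-1}$. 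Applying \Cref{feasible-labels-set} once more to $s^\ell_t\in Q^\ell_t(s^{\ell+1}_t,\ldots,s^k_t)$ produces a feasible labeling of $\II_t$ agreeing with $s^i_t$ for every $i\in\{\ell,\ldots,k\}$, which is statement~1 at $\ell-1$. This closes the nested induction.

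Finally, statement~3 falls out of statement~1 at $\ell=0$: any feasible labeling $\gamma$ of $\II_t$ with respect to $\rho_t$ satisfying $\gamma(L^i_t)=s^i_t$ for all $i\in\{1,\ldots,k\}$ must, by feasibility, label some interval covering time $t$ by $\sigma_t$; since every $L^i_t$ is exactly the unique interval of $\II^i_t$ covering $t$, we conclude $\sigma_t=s^i_t$ for some $i$, so the algorithm's servers cover the request.

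The only place that requires real care is the no-movement branch, and this is where I expect a referee to look closely: one must resist the temptation to deduce $s^\ell_{t-1}\in Q^\ell_t(s^{\ell+1}_t,\ldots,s^k_t)$ from the previous time step plus \Cref{subset-feasible-labels}, since heavier servers may have moved (making the monotonicity inapplicable) and even when they haven't, \Cref{subset-feasible-labels} only gives the reverse containment. The argument must instead appeal to the explicit membership check performed by the algorithm in line~\ref{alg:if2} against the \emph{current} set $Q^\ell_t(s^{\ell+1}_t,\ldots,s^k_t)$. Once this point is isolated, the remainder is a clean structural induction using only \Cref{feasible-labels-set}.
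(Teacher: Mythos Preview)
Your proposal is correct and follows essentially the same approach as the paper: a downward induction on $\ell$ from $k$ to $0$, using feasibility of $\II_t$ at $\ell=k$, deriving non-emptiness of $Q^\ell_t$ from statement~1, verifying $s^\ell_t\in Q^\ell_t(s^{\ell+1}_t,\ldots,s^k_t)$ in every branch, and reading off statement~3 from statement~1 at $\ell=0$. The outer induction on $t$ you set up is never actually invoked (as you yourself note, the no-movement branch is handled by the explicit check in line~\ref{alg:if2}, not by the lemma at $t-1$), so it can be dropped; otherwise your argument is a slightly more explicit version of the paper's, which summarizes the branch analysis as ``we designed the algorithm so that $s^\ell_t$ is guaranteed to be in $Q^\ell_t(s^{\ell+1}_t,\ldots,s^k_t)$.''
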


\begin{proof}
% \aayr{why is the first claim proven for (k-1,..0), and not (k,0)}
For an arbitrary $t\in\{1,\ldots,T\}$, we prove the lemma by reverse induction on $\ell\in\{0,\ldots,k\}$ in an interleaved manner. More precisely, as the base case, we prove the first claim for $\ell=k$. Assuming that the first claim holds for an arbitrary $\ell>0$, we prove that the second claim holds for the same $\ell$. Assuming that the second claim holds for an arbitrary $\ell>0$, we prove that the first claim holds for $\ell-1$. Finally, assuming that the first claim holds for $\ell=0$, we prove that the third claim holds.
    
    As the base case, we need to prove the first claim for $\ell=k$. We know that the service pattern $\II_t$ that the adversary provides is feasible. This implies that there exists a feasible labeling $\gamma$ of $\II_t$ with respect to $\rho_t$. The condition $\gamma(L^i_t)=s^i_t$ for all $i\in\{\ell+1,\ldots,k\}$ is vacuously true.
    
    For the inductive step, assume that the first claim holds for some $0<\ell\leq k$. Hence, there exists a feasible labeling $\gamma$ of $\II_t$ such that $\gamma(L^i_t)=s^i_t,$ for all $i\in\{\ell+1,\ldots,k\}$. By \Cref{feasible-labels-set}, the point $\gamma(L^\ell_t)$ lies in $Q^{\ell}_t(s^{\ell+1}_t,\ldots,s^k_t)$. Thus, $Q^{\ell}_t(s^{\ell+1}_t,\ldots,s^k_t)\neq\emptyset$.
    
    We designed the algorithm so that $s^{\ell}_t$ is guaranteed to be in $Q^{\ell}_t(s^{\ell+1}_t,\ldots,s^k_t)$. By \Cref{feasible-labels-set}, there exists a feasible labeling $\gamma'$ of $\II_t$ such that $\gamma'(L^i_t)=s^i_t$ for all $i\in\{\ell,\ldots,k\}$. Hence, the first claim holds for $\ell-1$ as well. This proves the first two claims.
    
    Finally, since the first claim holds for $\ell=0$, there exists a feasible labeling $\gamma$ of $\II_t$ with respect to $\rho_t$ such that $\gamma(L_t^i)=s_t^i$ for all $i\in\{1,\ldots,k\}$. By definition of feasibility of a labeling (\Cref{feasible-labeling}), $\gamma$ must assign the label $\sigma_t$ to some interval in $\II_t$ that contains $t$. Since the only intervals in $\II_t$ that contains $t$ are the $L_t^i$'s, we must have $\sigma_t=s_t^i$ for some $i$. Since $s_t^i$'s are the positions of the algorithm's servers after processing the $t$'th request, the request gets served.
\end{proof}

\subsection{Competitive Analysis}\label{sec:comptv}
We begin by proving that the algorithm indeed maintains \Cref{inv:random-point} after serving every request.

\begin{lemma}\label{random-point}
    For every $t\in \{1,\ldots,T\}$, every $\ell\in \{1,\ldots,k\}$, and every $p^{\ell+1},\ldots,p^k \in U$, conditioned on $s^i_t = p^i$ for all $i\in \{\ell+1,\ldots,k\}$ and $Q_t^{\ell}(p^{\ell+1},\ldots, p^k)\neq\emptyset$, $s^{\ell}_t$ is a uniformly random point in $Q_t^{\ell}(p^{\ell+1},\ldots, p^k)$.
\end{lemma}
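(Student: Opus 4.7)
I would prove the lemma by a nested induction: an outer induction on $t$ and, at each time $t$, a reverse induction on $\ell$ from $k$ down to $1$. The base case $t = 1$ is immediate, because the assumption $\ell_1 = k$ makes the condition on line~\ref{alg:if1} of~\Cref{alg:p2} true at every level, so each $s_1^{\ell}$ is drawn uniformly from $Q_1^{\ell}(s_1^{\ell+1},\ldots,s_1^k)$ by construction.

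For the inductive step, fix $t \geq 2$, $\ell$, and targets $p^{\ell+1},\ldots,p^k \in U$ with $Q_t^{\ell}(p^{\ell+1},\ldots,p^k) \neq \emptyset$, and let $E$ denote the event $\{s_t^i = p^i : \ell < i \leq k\}$. My plan is to compute $\Pr[s_t^{\ell} = q \text{ and } E]$ for each $q \in Q_t^{\ell}(p^{\ell+1},\ldots,p^k)$ by decomposing over the three branches the algorithm can take at level $\ell$, and to show the sum equals $\Pr[E]/|Q_t^{\ell}(p^{\ell+1},\ldots,p^k)|$. In the forced and unforced branches, the algorithm draws $s_t^{\ell}$ uniformly from $Q_t^{\ell}(s_t^{\ell+1},\ldots,s_t^k) = Q_t^{\ell}(p^{\ell+1},\ldots,p^k)$ using fresh randomness, so each branch contributes its probability mass uniformly to every candidate $q$. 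All the content lies in the no-movement branch.

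The defining condition of the no-movement branch at level $\ell$ is that $\ell > \ell_t$, that \texttt{flag} is \texttt{FALSE} upon reaching level $\ell$, and that $s_{t-1}^{\ell} \in Q_t^{\ell}(s_t^{\ell+1},\ldots,s_t^k)$. The first two conditions together preclude both forced and unforced movement at every level $i > \ell$, because $i > \ell > \ell_t$ and \texttt{flag} never flipped to \texttt{TRUE}; hence $s_t^i = s_{t-1}^i$ for all $i > \ell$, and the event $E$ coincides with $E' := \{s_{t-1}^i = p^i : \ell < i \leq k\}$ on this branch. The outer inductive hypothesis then gives that $s_{t-1}^{\ell} \mid E'$ is uniform on $Q_{t-1}^{\ell}(p^{\ell+1},\ldots,p^k)$, and~\Cref{subset-feasible-labels} (applicable because $\ell > \ell_t$) yields $Q_t^{\ell}(p^{\ell+1},\ldots,p^k) \subseteq Q_{t-1}^{\ell}(p^{\ell+1},\ldots,p^k)$. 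Moreover, the value of \texttt{flag} upon reaching level $\ell$ is a function only of $s_{t-1}^{\ell+1},\ldots,s_{t-1}^k$ together with the algorithm's fresh randomness at the higher levels at time $t$, hence independent of $s_{t-1}^{\ell}$ given $E'$. Intersecting with the branch's defining requirement $s_{t-1}^{\ell} \in Q_t^{\ell}(p^{\ell+1},\ldots,p^k)$ preserves uniformity on that subset, so the no-movement branch also contributes uniform mass to every $q \in Q_t^{\ell}(p^{\ell+1},\ldots,p^k)$. Summing the three contributions yields exactly the claimed uniform distribution.

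The main obstacle I expect is the bookkeeping required to bridge the stated conditioning on current positions $s_t^{\ell+1},\ldots,s_t^k$ to the outer inductive hypothesis, which is phrased in terms of previous positions $s_{t-1}^{\ell+1},\ldots,s_{t-1}^k$. The bridge in each branch is supplied by the algorithm's own dynamics: in the forced and unforced branches the fresh randomization overwrites any dependence on the past, and in the no-movement branch the absence of movement at every higher level collapses $E$ into $E'$. Rigorously checking the stated conditional independence of \texttt{flag} from $s_{t-1}^{\ell}$ given $E'$, and ensuring that nothing in the case analysis subtly biases the conditional distribution, is where the real care of the proof lies.
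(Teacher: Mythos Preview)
Your proposal is correct and matches the paper's proof: induction on $t$, case analysis on the three branches (forced, unforced, no-movement), with the no-movement case reducing to the inductive hypothesis at $t-1$ via \Cref{subset-feasible-labels} after observing that failure of the check in step~\ref{alg:if1} forces $s_{t-1}^i = s_t^i$ for all $i > \ell$. The inner reverse induction on $\ell$ you announce is never actually invoked in your argument (nor does the paper use one), and your explicit attention to the conditional independence of \texttt{flag} from $s_{t-1}^{\ell}$ given $E'$ is a point the paper handles more informally.
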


\begin{proof}
We prove this lemma by induction of time $t$. The base case of $t=1$ is true because we are assuming $\ell_1=k$, which makes the condition in step~\ref{alg:if1} of the algorithm true.

Consider the inductive case, where $t>1$. Note that the algorithm executes exactly one step out of \ref{alg:mov1}, \ref{alg:mov2}, and \ref{alg:no-movement}. Conditioned on the algorithm executing step \ref{alg:mov1} or \ref{alg:mov2}, $s^{\ell}_t$ is located at a uniformly random point in $Q_t^{\ell}(p^{\ell+1},\ldots, p^k)$ by design. On the other hand, suppose the algorithm executes step \ref{alg:no-movement}, that is, the checks in steps \ref{alg:if1} and \ref{alg:if2} fail. Then the fact that the check of step~\ref{alg:if1} failed implies that the algorithm did not move any server heavier than the $\ell$'th. Thus $(s_{t-1}^{\ell+1},\ldots,s_{t-1}^k)=(s_t^{\ell+1},\ldots,s_t^k)=(p^{\ell+1},\ldots,p^k)$. Therefore, by induction hypothesis, $s_{t-1}^{\ell}$ is a uniformly random point in $Q^{\ell}_{t-1}(p^{\ell+1},\ldots,p^k)$. %, which is a superset of $Q^{\ell}_t(p^{\ell+1},\ldots,p^k)$, by \Cref{subset-feasible-labels}. 
Additionally, conditioned on the failure of the check in step~\ref{alg:if2}, $ s^{\ell}_{t-1}\in Q^{\ell}_t(s_t^{\ell+1},\ldots,s_t^k)$, so $s^{\ell}_t=s^{\ell}_{t-1}$ is a uniformly random point in $Q^{\ell}_{t-1}(p^{\ell+1},\ldots,p^k)\cap Q^{\ell}_t(p^{\ell+1},\ldots,p^k)$. But by \Cref{subset-feasible-labels}, $Q^{\ell}_{t-1}(p^{\ell+1},\ldots,p^k)\cap Q^{\ell}_t(p^{\ell+1},\ldots,p^k)=Q^{\ell}_t(p^{\ell+1},\ldots,p^k)$, so $s^{\ell}_t$ is a uniformly random point in $Q^{\ell}_t(p^{\ell+1},\ldots,p^k)$. Thus, irrespective of which one of steps \ref{alg:mov1}, \ref{alg:mov2}, and \ref{alg:no-movement} is executed, $s^{\ell}_t$ is a uniformly random point in $Q^{\ell}_t(p^{\ell+1},\ldots,p^k)$, and this implies the claim.
\end{proof}

Having established that our algorithm is well-defined and that it indeed serves every request, we now focus on bounding the cost of algorithm's solution. For each $\ell\in\{1,\ldots,k\}$, define the random variables $X^{\ell}$ and $Y^{\ell}$ to be the number of \mone s  and \mtwo s respectively, of the algorithm's $\ell$'th server. First, we bound $X^\ell$ for all $\ell$ as follows. %\aayr{why use 2 diff notations "line-8-movement" and "movement happening in step 7"?}

\begin{lemma}\label{lem:xl}
    The following inequalities hold with probability one.
    \begin{enumerate}
        \item $X^\ell \leq X^{\ell+1} + Y^{\ell+1} + |\II_T^\ell|$ for all $\ell\in\{1,\ldots,k-1\}$.
        \item $X^k \leq|\II_T^k|$.
    \end{enumerate}
\end{lemma}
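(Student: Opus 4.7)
The plan is to carefully classify, for each fixed server level $\ell$ and each time step $t$, the circumstances under which Algorithm~\ref{alg:p2} records a forced movement of the $\ell$'th server, and then to bound the number of such time steps by the two terms on the right-hand side.

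First I would unpack the definition: from step~\ref{alg:if1} of the algorithm, a forced movement of server $\ell$ at time $t$ happens in exactly one of the following two (mutually exclusive) situations:
\begin{enumerate}
\item[(A)] $\ell \leq \ell_t$, i.e.\ level $\ell$ is among the levels that the adversary extends at time $t$;
\item[(B)] $\ell > \ell_t$ but the variable \texttt{flag} is already \texttt{TRUE} by the time the inner loop reaches level $\ell$.
\end{enumerate}
For case (A), I would invoke the structural observation that in the $\ell_t$-extension, a fresh interval $[t,t+1)$ is appended to $\II_{t-1}^\ell$ precisely when $\ell \leq \ell_t$, whereas for $\ell > \ell_t$ the last interval merely grows. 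Consequently the total number of time steps $t \in \{1,\ldots,T\}$ with $\ell_t \geq \ell$ is exactly $|\II_T^\ell|$ (recalling $\ell_1 = k$ initializes the first interval at every level). Thus situation (A) contributes at most $|\II_T^\ell|$ forced movements of server $\ell$.

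For case (B), I would trace how \texttt{flag} switches to \texttt{TRUE}: this can only occur at step~\ref{alg:mov2}, when some level $\ell'' > \ell$ records an unforced movement. Since the inner loop processes levels in decreasing order, at the moment level $\ell+1$ is processed that time step, \texttt{flag} is already \texttt{TRUE} (if $\ell'' > \ell+1$) or becomes \texttt{TRUE} at that very iteration (if $\ell'' = \ell+1$). In both subcases server $\ell+1$ is moved at time $t$, either via step~\ref{alg:mov1} (a forced movement) or via step~\ref{alg:mov2} (an unforced movement). Since at most one movement of server $\ell+1$ happens per time step, situation (B) contributes at most $X^{\ell+1} + Y^{\ell+1}$. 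Summing the bounds from (A) and (B) gives the first claim. The second claim is immediate: \texttt{flag} is initialized to \texttt{FALSE} at the start of every time step and level $k$ is processed first, so case (B) cannot occur for $\ell = k$, leaving only case (A), which contributes at most $|\II_T^k|$.

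The only mildly delicate point — and the one I would write most carefully — is the bookkeeping that avoids double counting in case (B): one must confirm that a time step $t$ in which server $\ell$ is forced moved because of \texttt{flag} indeed corresponds to a distinct movement of server $\ell + 1$. This is fine because each time step contributes at most one movement per level, so the map ``time step in case (B) for level $\ell$'' $\mapsto$ ``movement of server $\ell+1$ at that time step'' is injective. Everything else is a direct read-off from the control flow of Algorithm~\ref{alg:p2}.
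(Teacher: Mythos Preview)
Your proof is correct and follows essentially the same charging argument as the paper: case~(A) is charged to the new interval beginning in $\II_T^\ell$, and case~(B) is charged to the (forced or unforced) movement of server $\ell+1$ at the same time step. Your write-up is in fact a bit more careful than the paper's in making cases~(A) and~(B) disjoint and in spelling out the injectivity of the charging, but the underlying idea is identical.
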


\begin{proof}
For $\ell<k$, every \mone\ of the $\ell$'th server happens at time $t$ only if \texttt{flag} is true or $\ell\leq\ell_t$. Observe that in the former case, the $(\ell+1)$'th server must have moved at time $t$ too, so we charge the movement of the $\ell$'th server to the movement of the $(\ell+1)$'th server, which could be either forced or unforced. In the latter case, since $\II_T$ is a hierarchical service pattern, a new interval starts at time $t$ in the $\ell$'th level $\II_T^{\ell}$ of $\II_T$, so we charge the movement of the $\ell$'th server to that interval. The argument for the second claim is the same as above except that \texttt{flag} is never true; a forced movement of the $k$'th server happens at time $t$ only if $\ell_t=k$.
\end{proof}

Next, we bound $Y^{\ell}$ by first bounding the number of \mtwo s of the $\ell$'th server in an arbitrary interval in which no \mone\ of the $\ell$'th server happens.

\begin{lemma}\label{lem:hnl}
    % \aayr{mention the interval in which $s^j=p^j$}
    For every $\ell\in\{1,\ldots,k\}$, every $p^{\ell+1},\ldots,p^k\in U$, and every $t_{\text{begin}}$, $t_{\text{end}}$ such that $1\leq t_{\text{begin}}<t_{\text{end}}\leq T+1$, the following holds. 
    Conditioned
    %on the event that no \mone\ of the algorithm's $\ell$'th server happens in the interval $(t_{\text{begin}},t_{\text{end}})$, and 
    on the event that $s^j_t=p^j$ for all $j\in\{\ell+1,\ldots,k\}$, and all $t \in (t_{\text{begin}},t_{\text{end}})$, the expected number of \mtwo s of the algorithm's $\ell$'th server is at most $H(n_{\ell})$, where $H$ denotes the harmonic function defined as $H(n)=1+1/2+\cdots+1/n$.
\end{lemma}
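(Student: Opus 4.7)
The plan is to prove this lemma by a harmonic/telescoping argument applied to the sizes $q_t := |Q^\ell_t(p^{\ell+1},\ldots,p^k)|$ as $t$ ranges over the conditioning interval. The engine is \Cref{random-point}, which asserts that within the conditioning event, the position $s^\ell_{t-1}$ is distributed uniformly over $Q^\ell_{t-1}(p^{\ell+1},\ldots,p^k)$ at each relevant time step.

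First, I would characterize when an unforced movement of the $\ell$'th server can actually occur inside the interval. By inspection of \Cref{alg:p2}, an unforced movement corresponds to executing step~\ref{alg:mov2}, which requires \texttt{flag} to be \texttt{FALSE} when server $\ell$ is processed (so no heavier server did an unforced movement at time $t$), $\ell_t < \ell$, and $s^\ell_{t-1} \notin Q^\ell_t(s^{\ell+1}_t,\ldots,s^k_t)$. In this regime, \Cref{subset-feasible-labels} yields the crucial containment $Q^\ell_t \subseteq Q^\ell_{t-1}$, so by \Cref{random-point} the conditional probability that $s^\ell_{t-1}$, uniform over $Q^\ell_{t-1}$, falls outside $Q^\ell_t$ is exactly the fractional shrinkage $(q_{t-1} - q_t)/q_{t-1}$.

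Second, I would apply the elementary inequality $(q_{t-1} - q_t)/q_{t-1} \leq H(q_{t-1}) - H(q_t)$, valid for positive integers $q_t \leq q_{t-1}$ (each term $1/q_{t-1}$ in the left sum is at most the corresponding $1/j$ on the right). Summing over all $t$ in the interval, the bound telescopes to $H(q_{\text{start}}) - H(q_{\text{final}}) \leq H(n_\ell)$, where the initial finite size is controlled by the dichotomy in \Cref{dichotomy-theorem-extn}, which caps any non-$U$ value of $Q^\ell$ at $n_\ell$.

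The main obstacle will be the boundary case $Q^\ell_t = U$, in which $q_t$ is effectively infinite. I would resolve this by invoking \Cref{dichotomy-theorem-extn} again: since $Q^\ell_t$ is either $U$ or of size at most $n_\ell$, no intermediate sizes arise, and the very first transition from $U$ to a proper subset drops $q_t$ directly to some value $q^* \leq n_\ell$. This first transition can contribute at most $1 = H(1)$ to the unforced-movement count, and since $1 + (H(q^*) - H(1)) = H(q^*) \leq H(n_\ell)$, the telescoping on subsequent shrinkages keeps the total within the budget. An analogous patching argument handles any potential forced resets of the $\ell$'th server within the interval arising from $\ell_t \geq \ell$: each such reset places $s^\ell$ uniformly in a fresh set of size at most $n_\ell$ (by \Cref{random-point} combined with \Cref{dichotomy-theorem-extn}), allowing the telescoping to be restarted from a state consistent with the harmonic-budget accounting.
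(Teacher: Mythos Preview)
Your core argument---using \Cref{random-point} to get $\Pr[\text{unforced move at }t]=(q_{t-1}-q_t)/q_{t-1}$, bounding this by $H(q_{t-1})-H(q_t)$, and telescoping---is exactly the paper's approach. Your treatment of the first transition out of $Q^\ell=U$ (budget $1$ for that drop, after which $q\leq n_\ell$ by \Cref{dichotomy-theorem-extn}) also matches the paper, which isolates the earliest time $t_1$ with $Q_{t_1}\subsetneq Q_{t_1-1}$, bounds that term by $1$, and telescopes the remainder to $H(|Q_{t_1}|)-H(|Q_{t_{\text{end}}-1}|)\leq H(n_\ell)-1$.

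Your final paragraph, however, does not work. If a ``forced reset'' from $\ell_t\geq\ell$ occurred at some $t^*$ inside the interval, then $L^\ell_{t^*}=[t^*,t^*{+}1)$ would be a fresh interval and $Q^\ell_{t^*}$ could equal all of $U$ (not ``at most $n_\ell$'' as you write); more importantly, the telescoping would restart with a \emph{new} $H(n_\ell)$ budget. With $r$ such resets the total becomes $(r{+}1)H(n_\ell)$, so no patching of this kind can recover the stated bound. The paper sidesteps the issue by tacitly assuming $\ell_t<\ell$ for every $t\in(t_{\text{begin}},t_{\text{end}})$ when it invokes \Cref{subset-feasible-labels}. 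That hypothesis is not implied by the conditioning $s^j_t=p^j$ alone, but it \emph{is} guaranteed in the only application (\Cref{lem:yl}), where $(t_{\text{begin}},t_{\text{end}})$ lies strictly between two consecutive forced movements of server~$\ell$---and a forced movement of server~$\ell$ is precisely what $\ell_t\geq\ell$ (or an unforced move of a heavier server) triggers. Replace your patching paragraph with this observation.
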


\begin{proof}
    Conditioned on the event that $s^j_t=p^j$ for all $j\in\{\ell+1,\ldots,k\}$ and all $t \in (t_{\text{begin}},t_{\text{end}})$,
%(i.e. given servers $\ell+1,\ldots,k$ remain at the same position throughout this interval)
we use \Cref{subset-feasible-labels} to claim that $Q^{\ell}_{t-1}(p^{\ell+1},\ldots,p^k)\supseteq Q^{\ell}_{t}(p^{\ell+1},\ldots,p^k)$ for every $t\in(t_{\text{begin}},t_{\text{end}})$. For brevity we write $Q_t$ for $Q^{\ell}_{t}(p^{\ell+1},\ldots,p^k)$.
    Let $Z_t$ be the indicator random variable of the event that an \mtwo\ happens at time $t$. From the algorithm, we know that this event happens if and only if $s_{t-1}^{\ell}\in Q_{t-1}\setminus Q_t$. From \Cref{random-point}, we know that $s^\ell_{t-1}$ is a uniformly random point in $Q_{t-1}$. Thus,
\[\mathbb{E}[Z_t]=\frac{|Q_{t-1}|-|Q_t|}{|Q_{t-1}|} \leq \frac{1}{|Q_{t-1}|} + \frac{1}{|Q_{t-1}|-1} + \cdots + \frac{1}{|Q_{t}|+1}=H(|Q_{t-1}|)-H(|Q_{t}|)\]
Let $t_1$ denote the earliest time $t$ for which $Q_t\subsetneq Q_{t-1}$. Then the expected number of \mtwo s is bounded as
\[\sum_{t=t_1}^{t_{\text{end}}-1}\mathbb{E}[Z_t]\leq1+\sum_{t=t_1+1}^{t_{\text{end}}-1}H(|Q_{t-1}|)-H(|Q_{t}|)=1+H(|Q_{t_1}|)-H(|Q_{t_{\text{end}}-1}|)\text{.}\]
Since $Q_{t_1}\subsetneq Q_{t_1-1}\subseteq U$, by \Cref{dichotomy-theorem-extn}, we have $|Q_{t_1}|\leq n_{\ell}$. By the second claim of \Cref{lem:nonempty}, $|Q_{t_{\text{end}}-1}|\geq1$. Thus, the expected number of \mtwo s is at most $H(n_{\ell})$.
\end{proof}

\begin{lemma}\label{lem:yl}
For every $\ell\in\{1,\ldots,k\}$, we have $\mathbb{E}[Y^{\ell}]\leq H(n_{\ell})\cdot\mathbb{E}[X^{\ell}]$.
\end{lemma}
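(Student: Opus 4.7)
The plan is to partition the time horizon by the forced movements of the $\ell$'th server and apply \Cref{lem:hnl} on each resulting piece. Let $T_1<T_2<\cdots<T_{X^{\ell}}$ be the (random) times at which server $\ell$ undergoes a forced movement, and set $T_{X^{\ell}+1}=T+1$ as a sentinel; since the first request forces all servers ($\ell_1=k\geq\ell$), we always have $T_1=1$. Writing $Y^{\ell}=\sum_{i=1}^{X^{\ell}} Y^{\ell}_i$, where $Y^{\ell}_i$ counts the unforced movements of server $\ell$ inside the open interval $(T_i,T_{i+1})$, reduces the lemma to bounding each conditional expectation $\mathbb{E}[Y^{\ell}_i\mid\cdot]$ by $H(n_{\ell})$.

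The central structural claim required for this reduction is: no server heavier than the $\ell$'th ever moves during $(T_i,T_{i+1})$. This is where I would spend most of the work. Observe that \texttt{flag} only ever toggles from \texttt{FALSE} to \texttt{TRUE}, and only inside the unforced-movement branch (step~\ref{alg:mov2}), so at the moment \Cref{alg:p2} processes level $\ell$ at time $t$, \texttt{flag}=\texttt{TRUE} if and only if some $\ell''>\ell$ has already undergone an unforced movement at time $t$. Now suppose, for contradiction, that some $\ell'>\ell$ moves at time $t\in(T_i,T_{i+1})$. If this is an unforced movement of $\ell'$, it sets \texttt{flag}=\texttt{TRUE} before the loop reaches $\ell$. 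If it is a forced movement triggered by \texttt{flag}=\texttt{TRUE} at level $\ell'$, the above equivalence supplies some $\ell''>\ell'>\ell$ with an unforced movement, again making \texttt{flag}=\texttt{TRUE} at level $\ell$. Finally, if it is a forced movement triggered by $\ell'\leq\ell_t$, then $\ell<\ell'\leq\ell_t$ gives $\ell\leq\ell_t$. In all three cases, the check in step~\ref{alg:if1} succeeds for $\ell$, so server $\ell$ has a forced movement at $t$, contradicting $t\in(T_i,T_{i+1})$.

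With this structural claim in hand, the proof wraps up quickly. Conditioning on $X^{\ell}$, on the forced-movement times $T_1,\ldots,T_{X^{\ell}}$, and on the positions $(p^{\ell+1},\ldots,p^k)$ that the heavier servers occupy throughout $(T_i,T_{i+1})$, I would apply \Cref{lem:hnl} with $t_{\text{begin}}=T_i$ and $t_{\text{end}}=T_{i+1}$ to get $\mathbb{E}[Y^{\ell}_i\mid\cdots]\leq H(n_{\ell})$. Summing over $i$ and taking iterated expectations,
\[
\mathbb{E}[Y^{\ell}]=\mathbb{E}\!\left[\sum_{i=1}^{X^{\ell}} Y^{\ell}_i\right]\leq \mathbb{E}\bigl[X^{\ell}\cdot H(n_{\ell})\bigr]=H(n_{\ell})\,\mathbb{E}[X^{\ell}].
\]

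The main obstacle is the structural claim itself: it rests on the somewhat delicate flag-propagation semantics of \Cref{alg:p2}, so one must enumerate every way a heavier server can move (forced via $\ell_t$, forced via \texttt{flag}, or unforced) and verify that each case propagates to a forced movement of $\ell$ at the same time. Once that is established, the rest of the argument is a standard application of linearity of expectation on top of \Cref{lem:hnl}.
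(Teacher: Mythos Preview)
Your proposal is correct and follows essentially the same approach as the paper: partition time by the forced movements of server~$\ell$, observe that heavier servers stay put on each piece, and apply \Cref{lem:hnl} piecewise. The paper asserts the structural claim (``it is easy to notice that the servers $\ell+1,\ldots,k$ remain at the same position throughout this interval'') without justification, whereas you provide a careful case analysis of the \texttt{flag} semantics to establish it; your treatment is thus a strict elaboration of the paper's argument.
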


\begin{proof}
Let us condition on the sequence of timestamps at which a \mone\ of the $\ell$'th server takes place. If $t_1,t_2$ are two consecutive timestamps in this sequence, \aay{it is easy to notice that the servers $\ell+1,\ldots,k$ remain at the same position throughout this interval}. Then \Cref{lem:hnl} applied to the interval $(t_1,t_2)$ implies that the expected number of \mtwo s of the $\ell$'th server in this interval is at most $H(n_{\ell})$. Summing up over all pairs $t_1,t_2$ of consecutive timestamps, we get $\mathbb{E}[Y^{\ell}|X^{\ell}=x]\leq H(n_{\ell})\cdot x$, and therefore, $\mathbb{E}[Y^{\ell}]\leq H(n_{\ell})\cdot\mathbb{E}[X^{\ell}]$.
\end{proof}

\maintheorem*
\begin{proof}
    From \Cref{lem:nonempty}, we already know that \Cref{alg:p2} serves every request in $\rho_T$. Towards proving competitiveness of the algorithm, we first define the constants $c_k,\ldots,c_1$ inductively as follows: $c_k = H(n_{k})+1$ and $c_{\ell}=(H(n_{\ell})+1)\cdot (c_{\ell+1}+1)$, for every
    $\ell\in\{1,\ldots,k-1\}$. We claim that for every $\ell\in\{1,\ldots,k\}$, the expected number of movements of the algorithm's $\ell$'th server, which equals $\mathbb{E}[X^\ell]+\mathbb{E}[Y^\ell]$, is at most $c_{\ell}$ times the number of movements of the adversary's $\ell$'th server, which equals $|\II^\ell_T|$. %In other words, $\mathbb{E}[X^\ell+Y^\ell] \leq c_\ell\cdot|\II^\ell_T|$. 
    We prove this claim using reverse induction on $\ell$ from $k$ to $1$. 
    
    For the base case, i.e.\ $\ell=k$, from \Cref{lem:yl} we have that $\mathbb{E}[Y^k]\leq H(n_{k})\cdot\mathbb{E}[X^{k}]$. From \Cref{lem:xl}, we know that $\mathbb{E}[X^{k}]\leq|\II_T^k|$. Thus, the expected number of algorithm's $k$'th server movements is,
\[\mathbb{E}[X^k]+\mathbb{E}[Y^k]\leq(H(n_k)+1)\cdot\mathbb{E}[X^{k}]=c_k\cdot|\II_T^k|\text{.}\]
    
    For the inductive case, assume that $\mathbb{E}[X^{\ell+1}]+\mathbb{E}[Y^{\ell+1}] \leq c_{\ell+1}\cdot|\II^{\ell+1}_T|$, for an arbitrary $\ell\in\{1,\ldots,k-1\}$. From \Cref{lem:yl}, we have that $\mathbb{E}[Y^\ell]\leq H(n_{\ell})\cdot\mathbb{E}[X^{\ell}]$, and from \Cref{lem:xl}, we have that $\mathbb{E}[X^{\ell}]\leq\mathbb{E}[X^{\ell+1}]+\mathbb{E}[Y^{\ell+1}] +|\II_T^\ell|$. Thus, we have,
\[\mathbb{E}[X^\ell]+\mathbb{E}[Y^\ell]\leq(H(n_{\ell})+1)\cdot\mathbb{E}[X^{\ell}]\leq(H(n_{\ell})+1)\cdot\left(\mathbb{E}[X^{\ell+1}]+\mathbb{E}[Y^{\ell+1}]+|\II_T^\ell|\right)\text{.}\]
Recall that $\II_T=(\II_T^1,\ldots,\II_T^k)$ is a hierarchical service pattern, and therefore, $|\II_T^{\ell+1}|\leq|\II_T^{\ell}|$. Using this fact, the induction hypothesis, and the definition of $c_{\ell}$, we get,
\[\mathbb{E}[X^\ell]+\mathbb{E}[Y^\ell]\leq(H(n_{\ell})+1)\cdot\left(c_{\ell+1}\cdot|\II^{\ell+1}_T|+|\II_T^\ell|\right)\leq(H(n_{\ell})+1)\cdot(c_{\ell+1}+1)\cdot|\II_T^\ell|=c_\ell\cdot |\II_T^\ell|\text{,}\]
as required.

As a consequence of the above inductive claim, the total cost of the algorithm is at most $\max\{c_1,\ldots,c_k\}=c_1$ times the cost of the adversary's service pattern. Moreover, from the recurrence relation defining $c_k,\ldots,c_1$ and the upper bound on $n_{\ell}$ from \Cref{dichotomy-theorem}, it is clear that $c_1$ is $2^{O(k^2)}$. Thus, the competitive ratio of our algorithm is $2^{O(k^2)}$.
\end{proof}

\section{Concluding Remarks and Open Problems}

The main open question of finding the randomized competitive ratio of weighted $k$-server on uniform metrics still remains unresolved. Our decomposition approach and the randomized algorithm for \ptwo\ imply that the task of designing a $2^{\poly(k)}$-competitive randomized algorithm for weighted $k$-server on uniform metrics is equivalent to designing a $2^{\poly(k)}$-competitive algorithm for \pone. We do not know any non-trivial bounds on the competitive ratio of \pone\ and it is not even clear whether it is easier or harder than \ptwo\ in terms of competitiveness. While the known lower bound constructions for weighted $k$-server also apply to \ptwo, these constructions fail to get a lower bound on the competitive ratio of \pone. 
We therefore propose the open problem of finding bounds on the competitive ratio of \pone\ in the deterministic as well as randomized setting. 

In the deterministic setting, the competitive ratio of \pone\ is bounded from below by the competitive ratio of weighted $k$-server divided by the competitive ratio of \ptwo, again due to~\Cref{two-problem-theorem}. However, for this to give a non-trivial lower bound on the competitive ratio of \pone, we require an upper bound on the competitive ratio of \ptwo\ that is less than the known lower bound on the competitive ratio of weighted $k$-server. Unfortunately, no such upper bound is known. Thus, showing a separation between weighted $k$-server and \ptwo\ is an interesting open problem.

Additionally, we also believe that closing the quadratic gap between the exponents in the upper and lower bounds on the randomized competitive ratio of \ptwo\ is an interesting open problem, because it will result in a better understanding of the weighted $k$-server problem.

Finally, for the weighted $k$-server problem with $k>2$,
%on general metrics, an algorithm with a weight-independent competitive ratio is known for the weighted $2$-server problem only. 
no weight-independent and metric-independent upper bounds on the competitive ratio are known on any well-structured class of metrics larger than the class of uniform metrics. Proving such bounds seems rather ambitious, given our limited understanding of weighted $k$-server on uniform metrics. 

%We believe that the class of star metrics is a reasonable next candidate to analyze the weighted $k$-server problem on.

%(To Do: Specify deterministic upper bound for \ptwo. In particular, this deterministic upper bound for \ptwo, does not imply any non-trivial upper bound for \pone).

% \newpage
\bibliographystyle{plainurl}
%\bibliography{LIPIcs typesetting/bib2doi}
\bibliography{arxiv}

\begin{thebibliography}{10}

\bibitem{AchlioptasCN_TCS00}
Dimitris Achlioptas, Marek Chrobak, and John Noga.
\newblock Competitive analysis of randomized paging algorithms.
\newblock {\em Theor. Comput. Sci.}, 234(1-2):203--218, 2000.
\newblock \href {https://doi.org/10.1016/S0304-3975(98)00116-9}
  {\path{doi:10.1016/S0304-3975(98)00116-9}}.

\bibitem{AyyadevaraC_ESA21}
Nikhil Ayyadevara and Ashish Chiplunkar.
\newblock The randomized competitive ratio of weighted $k$-server is at least
  exponential.
\newblock In {\em {ESA}}, volume 204 of {\em LIPIcs}, pages 9:1--9:11. Schloss
  Dagstuhl - Leibniz-Zentrum f{\"{u}}r Informatik, 2021.
\newblock \href {https://doi.org/10.4230/LIPIcs.ESA.2021.9}
  {\path{doi:10.4230/LIPIcs.ESA.2021.9}}.

\bibitem{BansalEK_FOCS17}
Nikhil Bansal, Marek Eli{\'{a}}s, and Grigorios Koumoutsos.
\newblock Weighted $k$-server bounds via combinatorial dichotomies.
\newblock In {\em {FOCS}}, pages 493--504, 2017.
\newblock \href {https://doi.org/10.1109/FOCS.2017.52}
  {\path{doi:10.1109/FOCS.2017.52}}.

\bibitem{BansalEKN_SODA18}
Nikhil Bansal, Marek Eli{\'{a}}s, Grigorios Koumoutsos, and Jesper Nederlof.
\newblock Competitive algorithms for generalized $k$-server in uniform metrics.
\newblock In {\em {SODA}}, pages 992--1001, 2018.
\newblock \href {https://doi.org/10.1137/1.9781611975031.64}
  {\path{doi:10.1137/1.9781611975031.64}}.

\bibitem{Ben-DavidBKTW94}
Shai Ben-David, Allan Borodin, Richard~M. Karp, G{\'a}bor Tardos, and Avi
  Wigderson.
\newblock On the power of randomization in on-line algorithms.
\newblock {\em Algorithmica}, 11(1):2--14, 1994.
\newblock \href {https://doi.org/10.1007/BF01294260}
  {\path{doi:10.1007/BF01294260}}.

\bibitem{bubeck2022randomized}
S{\'{e}}bastien Bubeck, Christian Coester, and Yuval Rabani.
\newblock The randomized $k$-server conjecture is false!
\newblock In {\em {STOC}}, pages 581--594. {ACM}, 2023.
\newblock \href {https://doi.org/10.1145/3564246.3585132}
  {\path{doi:10.1145/3564246.3585132}}.

\bibitem{ChiplunkarV_TAlg20}
Ashish Chiplunkar and Sundar Vishwanathan.
\newblock Randomized memoryless algorithms for the weighted and the generalized
  $k$-server problems.
\newblock {\em {ACM} Trans. Algorithms}, 16(1):14:1--14:28, 2020.
\newblock \href {https://doi.org/10.1145/3365002} {\path{doi:10.1145/3365002}}.

\bibitem{FiatKLMSY_JAlg91}
Amos Fiat, Richard~M. Karp, Michael Luby, Lyle~A. McGeoch, Daniel~Dominic
  Sleator, and Neal~E. Young.
\newblock Competitive paging algorithms.
\newblock {\em J. Algorithms}, 12(4):685--699, 1991.
\newblock \href {https://doi.org/10.1016/0196-6774(91)90041-V}
  {\path{doi:10.1016/0196-6774(91)90041-V}}.

\bibitem{FiatR_TCS94}
Amos Fiat and Moty Ricklin.
\newblock Competitive algorithms for the weighted server problem.
\newblock {\em Theoretical Computer Science}, 130(1):85--99, 1994.
\newblock \href {https://doi.org/10.1016/0304-3975(94)90154-6}
  {\path{doi:10.1016/0304-3975(94)90154-6}}.

\bibitem{gupta2023efficient}
Anupam Gupta, Amit Kumar, and Debmalya Panigrahi.
\newblock Efficient algorithms and hardness results for the weighted $k$-server
  problem.
\newblock In {\em {APPROX/RANDOM}}, volume 275 of {\em LIPIcs}, pages
  12:1--12:19. Schloss Dagstuhl - Leibniz-Zentrum f{\"{u}}r Informatik, 2023.
\newblock \href {https://doi.org/10.4230/LIPIcs.APPROX/RANDOM.2023.12}
  {\path{doi:10.4230/LIPIcs.APPROX/RANDOM.2023.12}}.

\bibitem{KoutsoupiasP_JACM95}
Elias Koutsoupias and Christos~H. Papadimitriou.
\newblock On the $k$-server conjecture.
\newblock {\em J. ACM}, 42(5):971--983, 1995.
\newblock \href {https://doi.org/10.1145/210118.210128}
  {\path{doi:10.1145/210118.210128}}.

\bibitem{ManasseMS_STOC88}
Mark~S. Manasse, Lyle~A. McGeoch, and Daniel~Dominic Sleator.
\newblock Competitive algorithms for on-line problems.
\newblock In {\em {STOC}}, pages 322--333, 1988.
\newblock \href {https://doi.org/10.1145/62212.62243}
  {\path{doi:10.1145/62212.62243}}.

\bibitem{Newberg91}
Lee~A. Newberg.
\newblock The $k$-server problem with distinguishable servers.
\newblock Master's thesis, University of California, Berkeley, 1991.

\bibitem{Sitters_SIAMJC14}
Ren{\'e} Sitters.
\newblock The generalized work function algorithm is competitive for the
  generalized 2-server problem.
\newblock {\em SIAM J. Comput.}, 43(1):96--125, 2014.
\newblock \href {https://doi.org/10.1137/120885309}
  {\path{doi:10.1137/120885309}}.

\bibitem{SleatorT85}
Daniel~Dominic Sleator and Robert~Endre Tarjan.
\newblock Amortized efficiency of list update and paging rules.
\newblock {\em Commun. {ACM}}, 28(2):202--208, 1985.
\newblock \href {https://doi.org/10.1145/2786.2793}
  {\path{doi:10.1145/2786.2793}}.

\bibitem{Young_Algorithmica02}
Neal~E. Young.
\newblock On-line file caching.
\newblock {\em Algorithmica}, 33(3):371--383, 2002.
\newblock \href {https://doi.org/10.1007/s00453-001-0124-5}
  {\path{doi:10.1007/s00453-001-0124-5}}.

\end{thebibliography}

\appendix

\section{Lower Bound for \texorpdfstring{\ptwo}{(LaTeX error)}}\label{sec_lowerbound}

In this section, we show that the lower-bound construction for \WKS\ by Ayyadevara and Chiplunkar~\cite{AyyadevaraC_ESA21} applies to \ptwo\ and gives the same lower bound. In~\cite{AyyadevaraC_ESA21} the generation of an adversarial request sequence involves repeatedly calling a randomized recursive procedure named \textsf{strategy}.
%has $k$-levels of a recursive procedure \textsf{strategy}$(\ell, P)$ where $\ell$ is a non-negative integer less than $k$ and $P$ is a subset of points on metric space. For the base case, when $\ell=0$, the procedure requests a unique point in $P$.
Their analysis bounds the adversary's cost and the expected cost of an arbitrary deterministic algorithm, both amortized per \textsf{strategy} call. Then the exponential lower bound is established by an application of Yao's principle.
We show that essentially the same adversarial construction of input distribution works for \ptwo. The only change needed in the construction is that now the adversary must provide its service pattern along with the requested point in an online manner. %The adversarial strategy for the input distribution contains the procedure \textsf{adversary} which repeatedly calls the recursive procedure \textsf{strategy}$(k-1, P)$. We add an extra parameter $\ell_{ext}$ which gives the algorithm extra information about the adversary's service pattern. 

The weights of the servers are $1, \beta, \ldots, \beta^{k-1}$ where $\beta$ is a large integer. The sequence $n_0, n_1, \ldots$ is defined as $n_0 = 1$ and for $\ell>0$,
\[n_{\ell}=\left(\left\lceil\frac{n_{\ell-1}}{2}\right\rceil+1\right)\cdot\left(\left\lfloor\frac{n_{\ell-1}}{2}\right\rfloor+1\right)\text{.}\]
The adversarial strategy in~\cite{AyyadevaraC_ESA21} uses the following combinatorial result.

\begin{fact}[Bansal et al.~\cite{BansalEK_FOCS17}]\label{lem_setsystem}
Let $\ell\in\mathbb{N}$ and let $P$ be a set of $n_{\ell}$ points. There exists a set-system $\mathcal{Q}_{\ell}\subseteq2^P$ satisfying the following properties.
\begin{enumerate}
\item $\mathcal{Q}_{\ell}$ contains $\lceil n_{\ell-1}/2\rceil+1$ sets, each of size $n_{\ell-1}$.
\item For every $p\in P$, there exists a set in $\mathcal{Q}_{\ell}$ not containing $p$.
\item For every $p\in P$, there exists a $q\in P$ such that every set in $\mathcal{Q}_{\ell}$ contains at least one of $p$ and $q$.
\end{enumerate}
\end{fact}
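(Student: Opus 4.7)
The plan is to exhibit $\mathcal{Q}_\ell$ explicitly via a cross-in-grid construction. Set $a = \lceil n_{\ell-1}/2 \rceil + 1$ and $b = \lfloor n_{\ell-1}/2 \rfloor + 1$. First I would record the arithmetic identities $ab = n_\ell$ (immediate from the recurrence defining $n_\ell$) and $a + b - 2 = n_{\ell-1}$ (since $a + b = n_{\ell-1} + 2$). These identities let me identify $P$ with the $a \times b$ grid $[a] \times [b]$, and they match the target set size $n_{\ell-1}$ with $(a-1)+(b-1)$, which is exactly the size of the union of a row and a column once their shared corner is removed from both.

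The construction: for each $i \in [a]$, let
\[Q_i \;=\; \bigl(([a]\setminus\{i\}) \times \{1\}\bigr) \,\cup\, \bigl(\{i\} \times ([b]\setminus\{1\})\bigr),\]
so $Q_i$ is the symmetric difference of row $i$ and column $1$. Take $\mathcal{Q}_\ell = \{Q_1, \ldots, Q_a\}$. Property~1 is then immediate: there are $a$ sets, each of size $(a-1) + (b-1) = n_{\ell-1}$.

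For the remaining properties, the key observation is the XOR characterization: $(r,c) \in Q_i$ if and only if exactly one of the conditions ``$r = i$'' and ``$c = 1$'' holds. Property~2 follows quickly. If $c = 1$, then both conditions hold simultaneously at $i = r$, so $(r, 1) \notin Q_r$. If $c \neq 1$, then for any $i \neq r$ neither condition holds, so $(r, c) \notin Q_i$ (such $i$ exists since $a \geq 2$ whenever $n_{\ell-1} \geq 1$). For Property~3, given $p = (r, c)$, I choose the partner $q$ by cases. If $c \neq 1$, take $q = (r, 1)$: then $p \in Q_i$ iff $i = r$ while $q \in Q_i$ iff $i \neq r$, so every $Q_i$ contains exactly one of $p$ and $q$, and $q \neq p$ because $c \neq 1$. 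If $c = 1$, then by the analysis of Property~2 the unique $Q_i$ missing $p$ is $Q_r$, so it suffices to pick any $q \in Q_r$ distinct from $p$; e.g., $q = (r', 1)$ for any $r' \neq r$ works.

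The main hurdle is discovering the correct cross structure; once $Q_i$ is defined as a row-plus-column with the shared corner deleted, the XOR characterization makes both nontrivial properties essentially mechanical. I would also verify the degenerate case $b = 1$, which arises when $\ell = 1$ (so $n_0 = 1$, $a = 2$): the row-complement part $\{i\} \times ([b]\setminus\{1\})$ is empty and $\mathcal{Q}_\ell$ reduces to the two singletons $\{(1,1)\}$ and $\{(2,1)\}$ in a two-point ground set, which trivially satisfies all three properties.
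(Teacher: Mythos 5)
Your construction is correct, and all three properties check out. The arithmetic identities $ab = n_\ell$ and $a + b - 2 = n_{\ell-1}$ indeed follow from $\lceil x/2\rceil + \lfloor x/2\rfloor = x$ and the recurrence, and the XOR characterization $(r,c)\in Q_i \iff$ exactly one of $r=i$, $c=1$ holds is exactly what drives both Property~2 (take $i=r$ when $c=1$, any $i\neq r$ when $c\neq 1$) and Property~3 (pair $(r,c)$ with $(r,1)$, or, when $c=1$, pair $(r,1)$ with any $(r',1)$, $r'\neq r$). Your check that $a\geq 2$ always holds and your handling of the degenerate $b=1$ case close the remaining loose ends.

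Note, however, that the paper treats this statement as an imported black box: it is labeled a \emph{Fact} and cited from Bansal et al.\ without any accompanying proof, so there is no ``paper's own proof'' to compare against. What you have supplied is a self-contained reconstruction. The grid/cross construction you discovered is essentially the standard one underlying the cited result (the product form of the recurrence $n_\ell = (\lceil n_{\ell-1}/2\rceil+1)(\lfloor n_{\ell-1}/2\rfloor+1)$ is a strong hint that a rectangular grid with row-or-column sets is intended), so this is not a genuinely different route; it is the natural proof, written out in detail the paper omits. As a verification of the Fact it is complete and correct.
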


We modify the adversarial strategy from~\cite{AyyadevaraC_ESA21} for weighted $k$-server to get the following strategy for \ptwo.

\begin{algorithm}[H]
\SetAlgorithmName{Procedure}
\renewcommand{\thealgorithm}{}
\caption{\textsf{adversary}}\label{adversary}
Mark all points in $S$\;
\RepTimes{infinitely many}{
Pick a point $p$ uniformly at random from $S$ (with replacement)\;
Mark $p$\;
\If{All points in $S$ are marked}{\label{all-pages-marked}
$\ell_{ext}\gets k$\;
Unmark all points $q\in S$ other than $p$\;
}
\Else{
$\ell_{ext} \gets k-1$\;
}
Call \textsf{strategy}$(k-1,S\setminus\{p\}, \ell_{ext})$\;
}
\end{algorithm}

\begin{algorithm}[H]
\SetAlgorithmName{Procedure}
\renewcommand{\thealgorithm}{}
\caption{\textsf{strategy}$(\ell,P,\ell_{ext})$ (Promise: $|P|=n_{\ell}$ and $\ell_{ext}\geq\ell$.)}
%$\ell_{ext}\gets \max(\ell_{ext}, \ell)$\;
\eIf{$\ell=0$ (and therefore, $|P|=n_0=1$)}{
Output $(p,\ell_{ext})$, where $p$ is the unique point in $P$\;
%$\ell_{ext}\gets 0$\;
}{
Construct the set-system $\mathcal{Q}_{\ell}\subseteq2^P$ using~\Cref{lem_setsystem}\;
\RepTimes{$(\beta-1)\cdot\left(\lceil n_{\ell-1}/2\rceil+1\right)$}{
Pick a set $P'$ uniformly at random from $\mathcal{Q}_{\ell}$ (with replacement)\;
Call \textsf{strategy}$(\ell-1,P',\ell_{ext})$\;
$\ell_{ext}\gets\ell-1$\;
}
}
\end{algorithm}

The following claim bounds the expected cost of an arbitrary online algorithm for every \textsf{strategy} call made by \textsf{adversary}. Its proof is identical to the proof of Corollary 6 in~\cite{AyyadevaraC_ESA21}. It is noteworthy that all the arguments involved in that proof go through even in the revealed service pattern setting.

\begin{lemma}\label{algcost}
    For $\ell_{ext} = k $ or $k-1$, the expected cost of the algorithm per \textsf{strategy}$(k-1, P, \ell_{ext})$ call made by \textsf{adversary} is $(\beta-1)^{k-1}/(n_{k-1}+1)$.
\end{lemma}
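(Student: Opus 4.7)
The plan is to mirror the inductive analysis from Corollary~6 of~\cite{AyyadevaraC_ESA21}, verifying at each step that the additional information $\ell_{ext}$ revealed by the adversary does not give the algorithm any probabilistic advantage. First, I would establish, by induction on the recursion depth $\ell$ from $0$ up to $k-1$, a lower bound on the expected cost of any deterministic algorithm per invocation of \textsf{strategy}$(\ell, P, \ell_{ext})$. The base case $\ell=0$ requests the unique point in $P$, and the randomness inherited from the parent's uniform choice over $\mathcal{Q}_1$ forces a positive expected cost on the lightest server. The inductive step bundles the $(\beta-1)(\lceil n_{\ell-1}/2\rceil + 1)$ recursive calls to \textsf{strategy}$(\ell-1, P', \ell_{ext})$ with the cost paid by the $\ell$-th server (weight $\beta^{\ell}$) across these calls.

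The three properties of $\mathcal{Q}_{\ell}$ from~\Cref{lem_setsystem} drive the inductive step exactly as in~\cite{AyyadevaraC_ESA21}. Property~1 bounds the branching factor. Property~2 ensures that wherever the algorithm parks its $\ell$-th server, some set in $\mathcal{Q}_{\ell}$ excludes that point, yielding a positive probability of a forced move at every recursive call. Property~3 gives a pigeonhole argument limiting how many points in $P$ the algorithm can simultaneously ``cover'' with its current configuration. Composing these across the $k-1$ recursion levels and dividing by the number of leaf requests per outermost \textsf{strategy}$(k-1, \cdot, \cdot)$ invocation telescopes to the claimed amortized cost $(\beta-1)^{k-1}/(n_{k-1}+1)$.

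The \textbf{main obstacle}, and the only genuinely new ingredient compared to~\cite{AyyadevaraC_ESA21}, is justifying that the adversary's disclosure of $\ell_{ext}$ does not help the algorithm. I would address this in two parts. First, I would exhibit an explicit feasible labeling of the adversary's announced service pattern: the $\ell$-th server is labeled with the point being operated on by the currently active \textsf{strategy}$(\ell, \cdot, \cdot)$ call, while heavier servers retain labels inherited from their own parent calls, and server $k$ is relabeled precisely when the ``all points marked'' branch in \textsf{adversary} triggers a full remark. This establishes that the emitted sequence of pairs $(\sigma_t, \ell_t)$ is a valid \ptwo\ input. Second, I would note that the sequence of $\ell_{ext}$ values is a deterministic function of the recursion structure, which in turn is determined by the algorithm's view so far; therefore, conditioning on $\ell_{ext}$ does not alter the conditional distribution of future requested points, and the inductive lower-bound argument proceeds verbatim. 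I expect the most delicate case to be the transition where the adversary unmarks all points and calls with $\ell_{ext}=k$, since one must check that the full reset can be reconciled with a consistent labeling history without leaking information about which random sets $P'$ were chosen at lower recursion levels.
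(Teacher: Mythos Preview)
Your approach is correct and matches the paper's: both defer to Corollary~6 of~\cite{AyyadevaraC_ESA21} together with the observation that the revealed $\ell_t$ values are a deterministic function of position in the recursion tree (independent of the random $P'$ choices), so they convey nothing the algorithm could not already compute. Note, however, that the paper establishes feasibility of the revealed service pattern in a separate lemma (using Property~3 of \Cref{lem_setsystem} to select the label $q$ for the level-$\ell$ interval), not as part of this cost bound---Property~3 plays no role in the algorithm-cost argument itself, and your feasibility sketch is both misplaced here and not quite the labeling the paper uses.
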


We now show that the service pattern $\II=(\II^1,\ldots,\II^k)$ created by the procedure \textsf{adversary} is feasible, that is, it can be labeled in a way that all requests get served. We start by noting the following.

\begin{observation}\label{obs-service-pattern}
Let $(p_{t_1},\ell_{t_1}),\ldots,(p_{t_2},\ell_{t_2})$ be the input generated by a \textsf{strategy}$(\ell, P, \ell_{ext})$ call which starts at time $t_1$ and ends at time $t_2$. Then $\ell_{t_1}=\ell_{ext}\geq\ell$ and $\ell_{t_1+1},\ldots,\ell_{t_2}$ are all less than $\ell$. As a consequence, the following statements about the adversary's service pattern $\II=(\II^1,\ldots,\II^k)$ hold.
\begin{enumerate}
\item For all $\ell'\geq\ell$, a single interval in $\II^{\ell'}$ covers the interval $[t_1,t_2+1)$.
\item For all $\ell'\leq\ell$, the interval in $\II^{\ell'}$ covering $t_1$ starts at $t_1$, and the interval in $\II^{\ell'}$ covering $t_2$ ends at $t_2+1$.
\end{enumerate}
In particular, $[t_1,t_2+1)$ is an interval in $\II^{\ell}$.
\end{observation}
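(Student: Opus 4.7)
The approach is to first establish the lead claim---$\ell_{t_1}=\ell_{ext}\geq\ell$ and $\ell_{t_1+1},\ldots,\ell_{t_2}<\ell$---by structural induction matching the recursion of \textsf{strategy}, and then read off the two structural consequences about $\II$ directly from the definition of $\ell_t$-extension.

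For the induction, the base case $\ell=0$ is immediate: \textsf{strategy}$(0,\cdot,\ell_{ext})$ emits the single tuple $(p,\ell_{ext})$, so $t_1=t_2$ and $\ell_{t_1}=\ell_{ext}\geq 0$, with the tail claim vacuous. For the inductive step, \textsf{strategy}$(\ell, P, \ell_{ext})$ concatenates the outputs of its sub-calls to \textsf{strategy}$(\ell-1,\cdot,\cdot)$. The first sub-call inherits the passed-in $\ell_{ext}\geq\ell$, while every subsequent sub-call is invoked with the local $\ell_{ext}$ overwritten to $\ell-1$. Applying the induction hypothesis, the outer call's very first output has $\ell_{t_1}=\ell_{ext}\geq\ell$ (it is the first output of the first sub-call), and every subsequent output satisfies $\ell_t\leq\ell-1<\ell$.

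Given this, both consequences follow from the definition of the $\ell_t$-extension, which starts a fresh interval in levels $1,\ldots,\ell_t$ and extends the last interval in the remaining levels. For part~1, fix $\ell'\geq\ell$: every $t\in(t_1,t_2]$ has $\ell_t<\ell\leq\ell'$, so $\II^{\ell'}$ is extended (not broken) at each such step, and a single interval of $\II^{\ell'}$ contains $[t_1,t_2+1)$. For part~2, fix $\ell'\leq\ell$: since $\ell_{t_1}=\ell_{ext}\geq\ell\geq\ell'$, a fresh interval of $\II^{\ell'}$ begins exactly at $t_1$. To also pin down that the interval of $\II^{\ell'}$ covering $t_2$ ends at $t_2+1$, I would prove the auxiliary invariant that the first request (if any) issued after a \textsf{strategy}$(\ell,\cdot,\cdot)$ call terminates has $\ell_{t_2+1}\geq\ell$. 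This is where the main obstacle lies, since it requires reasoning about what happens outside the just-terminated call: walking up the call stack, either some enclosing \textsf{strategy}$(\ell'',\cdot,\cdot)$ with $\ell''>\ell$ still has iterations remaining and fires \textsf{strategy}$(\ell''-1,\cdot,\ell''-1)$, whose first output has $\ell_{t_2+1}=\ell''-1\geq\ell$; or all enclosing calls terminate and \textsf{adversary} fires a fresh \textsf{strategy}$(k-1,\cdot,\ell_{ext})$ with $\ell_{ext}\in\{k-1,k\}\geq\ell$. Either way $\ell_{t_2+1}\geq\ell\geq\ell'$, which closes part~2. The \emph{in particular} statement is just parts~1 and~2 combined at $\ell'=\ell$.
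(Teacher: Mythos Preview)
The paper states this as an observation without proof, so there is no argument to compare against; your structural induction on $\ell$ for the lead claim, together with the call-stack analysis showing $\ell_{t_2+1}\geq\ell$ to pin down the right endpoint in part~2, is a correct and complete justification.
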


% TODO: connecting statements

\begin{lemma}\label{feasible-induction}
Consider an arbitrary \textsf{strategy}$(\ell, P, \ell_{ext})$ call which starts at time $t_1$, ends at time $t_2$, and generates the input $(p_{t_1},\ell_{t_1}),\ldots,(p_{t_2},\ell_{t_2})$. Suppose for all $\ell'>\ell$, the single interval in $\II^{\ell'}$ covering $[t_1,t_2+1)$ is labeled with some point $p_{\ell'}$, such that $P\cap\{p_{\ell+1},\ldots,p_{k}\}\neq\emptyset$. Then the intervals in $\II^1,\ldots,\II^{\ell}$ that intersect $[t_1,t_2+1)$ (and therefore, are subsets of $[t_1,t_2+1)$) can be labeled in such a way that for all $t\in\{t_1,\ldots,t_2\}$ there exists $i\in\{1,\ldots,k\}$ such that the unique interval in $\II^i$ covering $t$ is labeled with $p_t$.
\end{lemma}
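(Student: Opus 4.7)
The plan is to prove the lemma by induction on $\ell$. In the base case $\ell=0$, the call \textsf{strategy}$(0,P,\ell_{ext})$ outputs a single request, so $t_1=t_2$ and $p_{t_1}$ is the unique element of $P$. The hypothesis $P\cap\{p_{\ell+1},\ldots,p_k\}\neq\emptyset$ forces $p_{t_1}=p_{\ell'}$ for some $\ell'\geq 1$, so the interval in $\II^{\ell'}$ covering $t_1$ is already labeled $p_{t_1}$ and the request is served. Since there are no intervals in $\II^1,\ldots,\II^0$, no further labeling is needed.

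For the inductive step with $\ell>0$, I first observe that $[t_1,t_2+1)$ lies inside a \emph{single} interval of $\II^\ell$: the code of \textsf{strategy} ensures that every recursive sub-call passes $\ell_{ext}^{(j)}\leq \ell-1$, so after time $t_1$ no new interval starts in $\II^\ell$ until the outer call terminates. This unique interval of $\II^\ell$ covering $[t_1,t_2+1)$ is the one to label. The key step is the choice of label: by hypothesis we may pick some $p\in P\cap\{p_{\ell+1},\ldots,p_k\}$, and by property (3) of the set system $\mathcal{Q}_\ell$ from Fact~\ref{lem_setsystem} there exists $q\in P$ such that every $P'\in\mathcal{Q}_\ell$ contains at least one of $p$ and $q$. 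I label the single $\II^\ell$-interval covering $[t_1,t_2+1)$ with $q$ and set $p_\ell:=q$.

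With the labels for all levels $\geq\ell$ now fixed, consider the $j$-th recursive call \textsf{strategy}$(\ell-1,P'^{(j)},\ell_{ext}^{(j)})$ over the time range $[t_1^{(j)},t_2^{(j)}+1)$. Note that $\ell_{ext}^{(j)}\geq\ell-1$ always (equality for $j>1$, $\ell_{ext}^{(1)}=\ell_{ext}\geq\ell$ for $j=1$). Applying Observation~\ref{obs-service-pattern} to the sub-call, for each $\ell'\geq\ell$ the single interval of $\II^{\ell'}$ covering $[t_1^{(j)},t_2^{(j)}+1)$ coincides with the outer interval of $\II^{\ell'}$ covering $[t_1,t_2+1)$, which is already labeled (with $p_{\ell'}$ for $\ell'>\ell$ and with $q$ for $\ell'=\ell$). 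Since $P'^{(j)}\in\mathcal{Q}_\ell$ contains $p$ or $q$, we get
\[P'^{(j)}\cap\{p_\ell,p_{\ell+1},\ldots,p_k\}\ \supseteq\ P'^{(j)}\cap\{q,p\}\ \neq\ \emptyset.\]
Hence the inductive hypothesis applies to each sub-call and yields a feasible labeling of the intervals in $\II^1,\ldots,\II^{\ell-1}$ contained in $[t_1^{(j)},t_2^{(j)}+1)$. Gluing these sub-labelings together with the label $p_\ell=q$ assigned to the $\II^\ell$-interval gives the required feasible labeling of $\II^1,\ldots,\II^\ell$ on the entire range $[t_1,t_2+1)$.

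The crux of the argument, and the step I expect to require the most care, is the choice of $p_\ell$: to ensure that the inductive hypothesis applies to every possible sub-call, the label assigned to the $\II^\ell$-interval must survive the adversary's random draw from $\mathcal{Q}_\ell$. Property (3) of $\mathcal{Q}_\ell$ is precisely tailored for this, since it guarantees a ``covering pair'' $(p,q)$ with $p$ inherited from the already-labeled higher levels, allowing the invariant $P'^{(j)}\cap\{p_\ell,\ldots,p_k\}\neq\emptyset$ to propagate cleanly down the recursion.
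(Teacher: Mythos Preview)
Your proof is correct and follows essentially the same approach as the paper's: induction on $\ell$, with the inductive step choosing the label $q$ for the level-$\ell$ interval via property~(3) of $\mathcal{Q}_\ell$ applied to some $p\in P\cap\{p_{\ell+1},\ldots,p_k\}$, and then invoking the inductive hypothesis on each recursive call since every $P'\in\mathcal{Q}_\ell$ meets $\{q,p\}\subseteq\{p_\ell,\ldots,p_k\}$. Your write-up is more explicit about why $[t_1,t_2+1)$ is covered by a single interval of $\II^\ell$ and why the higher-level labels persist across sub-calls, but these are elaborations of the same argument rather than a different route.
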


\begin{proof}
We prove by induction on $\ell$. For $\ell=0$, the set $P$ contains a single point, which gets requested. Since $P\cap \{p_1, \ldots, p_k\}\neq \emptyset$, the claim holds. 

For $\ell>0$, consider a point $p\in P\cap \{p_{\ell+1},\ldots, p_k\}$. From the third property in~\Cref{lem_setsystem}, there exists a point $q\in P$ such that every set in the set system $\mathcal{Q}_\ell$ contains at least one of the points $p$ or $q$. Label the interval $[t_1, t_2+1)$ in $\mathcal{I}^\ell$ by such a point $q$. Consider an arbitrary recursive call \textsf{strategy}$(\ell-1, P', \ell_{ext})$ which starts at time $t'_1\geq t_1$ and ends at time $t'_2\leq t_2$. We have $P'\cap \{q, p_{\ell+1}, \ldots, p_k\}\neq \emptyset$. By induction hypothesis, the intervals in $\mathcal{I}^1, \ldots, \mathcal{I}^{\ell-1}$ that intersect $[t'_1, t'_2+1)$ can be labeled in such a way that for all $t\in\{t'_1,\ldots,t'_2\}$ there exists $i\in\{1,\ldots,k\}$ such that the unique interval in $\II^i$ covering $t$ is labeled with $p_t$.
\end{proof}

\begin{lemma}
The service pattern $\II=(\II^1,\ldots,\II^k)$ created by the procedure \textsf{adversary} is feasible.
\end{lemma}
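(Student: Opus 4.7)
The plan is to exhibit a feasible labeling of $\II$ by first labeling the intervals in the top level $\II^k$ explicitly, and then invoking \Cref{feasible-induction} on each outer \textsf{strategy} call to propagate the labeling down to $\II^1,\ldots,\II^{k-1}$.

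First, I will analyze the structure of $\II^k$ induced by \textsf{adversary}. By \Cref{obs-service-pattern} applied to each outer-loop invocation of \textsf{strategy}$(k-1,S\setminus\{p\},\ell_{ext})$, any single interval $I\in\II^k$ corresponds to a maximal block of consecutive outer-loop iterations $j_0,j_0+1,\ldots,j'-1$, where iteration $j_0$ is the one whose call has $\ell_{ext}=k$ (which starts $I$) and iteration $j'$ is the next one with $\ell_{ext}=k$ (which starts the next interval in $\II^k$). Tracking the marking in \textsf{adversary}: immediately after iteration $j_0$ only $p_{j_0}$ is marked; iterations $j_0+1,\ldots,j'-1$ each mark an additional (possibly already-marked) point without completing the marking of $S$; and iteration $j'$ is, by definition, the first iteration within the block whose pick completes the marking of $S$. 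Hence $p_{j'}\notin\{p_{j_0},\ldots,p_{j'-1}\}$; more precisely, $\{p_{j_0},\ldots,p_{j'-1}\}=S\setminus\{p_{j'}\}$.

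I will label $I$ with $p_{j'}$. (For the potentially incomplete block at the end of any finite prefix of time, the same analysis guarantees at least one unmarked point of $S$, which I use as the label.) Now consider any outer-loop iteration $\iota\in\{j_0,\ldots,j'-1\}$ and its call \textsf{strategy}$(k-1,P,\ell_{ext})$ with $P=S\setminus\{p_\iota\}$ and $\ell_{ext}\in\{k-1,k\}$. The chosen label $p_{j'}$ of the covering interval in $\II^k$ satisfies $p_{j'}\neq p_\iota$, so $p_{j'}\in P$, yielding $P\cap\{p_{j'}\}\neq\emptyset$; together with $\ell_{ext}\geq k-1$, this meets the hypothesis of \Cref{feasible-induction} with $\ell=k-1$. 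Applying that lemma produces a feasible labeling of the intervals in $\II^1,\ldots,\II^{k-1}$ that intersect the time range of this call. The union of these labelings across all outer-loop iterations, together with the level-$k$ labels already assigned, is a labeling of $\II$ under which every request is served.

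The only step that requires any real combinatorial insight is the identification of the unique ``missing'' point $p_{j'}$ in each block of $\II^k$; I do not anticipate any significant difficulty because this is essentially forced by the semantics of the marking scheme in \textsf{adversary}, and the remainder is a direct application of \Cref{feasible-induction} and \Cref{obs-service-pattern}.
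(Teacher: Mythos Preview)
Your proposal is correct and follows essentially the same approach as the paper: label each interval of $\II^k$ by the unique point of $S$ that remains unmarked throughout that block (equivalently, the point whose marking triggers the next level-$k$ interval), and then invoke \Cref{feasible-induction} with $\ell=k-1$ on each outer \textsf{strategy} call. Your explicit treatment of the incomplete final block is a nice touch that the paper leaves implicit.
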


\begin{proof}
Every interval in $\II^k$ starts exactly when all points in $S$ are found to be marked. For every interval $I^k$ in $\II^k$ do the following. Label it by the point $q$ whose marking results in the beginning of the next interval in $\II^k$. In other words, $q$ is the last point to get marked after the unmarking step in the beginning of $I^k$. Thus, $q$ is never sampled by \textsf{adversary} during the interval $I^k$, and therefore $q$ belongs to the set $S\setminus\{p\}$ passed to every \textsf{strategy} call made by \textsf{adversary} during the interval $I^k$. Thus, by \Cref{feasible-induction} for $\ell=k-1$, the intervals in $\II^1,\ldots,\II^{k-1}$ that are subsets of $I^k$ can be labeled in such a way that all requests given during the interval $I^k$ are served. Thus, the service pattern $\II$ is feasible.
\end{proof}

Now, we bound the cost of the service pattern $\II=(\II^1,\ldots,\II^k)$ created by the procedure \textsf{adversary}. %Observe that, for any $\ell\in \{0,\ldots, k-1\}$, the value of $\ell_{ext}\geq \ell$ for every call of \textsf{strategy}$(\ell, P, \ell_{ext})$. For $\ell = k-1$, $\ell_{ext}=k$ if and only if all the points in the set $S$ are marked in the procedure \textsf{adversary}. Hence, the adversary moves it's $\ell$'th server once per \textsf{strategy}$(\ell, P, \ell_{ext})$ call. The following lemma bounds the cost of the service pattern created by the adversary.
We start by bounding the total cost of intervals created during a \textsf{strategy}$(\ell, P, \ell_{ext})$ call.

\begin{lemma}\label{advcostrec}
    Define the sequence $c_0, c_1, \ldots$ inductively as follows: $c_0=0$ and for $\ell>0$
    \[c_\ell = \beta^{\ell-1} + \beta\cdot\left(\left\lceil n_{\ell-1}/2\right\rceil+1\right)\cdot c_{\ell-1}\]
    For an arbitrary $\ell\in \{0, \ldots, k-1\}$ and $\ell_{ext} \geq \ell$, consider a call of \textsf{strategy}$(\ell, P, \ell_{ext})$ which starts at time $t_1$ and ends at time $t_2$. The total cost of the intervals in layers $\II_1,\ldots,\II_{\ell}$ that intersect the interval $[t_1, t_2+1)$ (equivalently, are subsets of $[t_1, t_2+1)$, by~\Cref{obs-service-pattern}) is at most $c_{\ell}$.
\end{lemma}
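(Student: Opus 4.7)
The plan is to prove this by reverse-free induction on $\ell$, using \Cref{obs-service-pattern} to control how intervals of lower layers nest inside the range of the \textsf{strategy} call. For the base case $\ell=0$, a \textsf{strategy}$(0,P,\ell_{ext})$ call merely outputs a single input pair $(p,\ell_{ext})$ without any recursion, and the set of layers $\II^1,\ldots,\II^0$ is empty, so the total cost in question is $0=c_0$.

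For the inductive step, fix a call to \textsf{strategy}$(\ell,P,\ell_{ext})$ with $\ell>0$, starting at $t_1$ and ending at $t_2$. By the first part of \Cref{obs-service-pattern}, $[t_1,t_2+1)$ is a single interval in $\II^{\ell}$, contributing a cost of exactly $\beta^{\ell-1}$, the weight of the $\ell$-th server. The procedure then makes $m:=(\beta-1)\cdot(\lceil n_{\ell-1}/2\rceil+1)$ recursive calls of the form \textsf{strategy}$(\ell-1,P',\cdot)$; call their time ranges $[t^{(j)}_1,t^{(j)}_2+1)$ for $j=1,\ldots,m$. Note that the first of these recursive calls receives the value $\ell_{ext}\geq\ell$, while each subsequent one is invoked with $\ell_{ext}=\ell-1$; in either case the value passed is at least $\ell-1$.

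Applying \Cref{obs-service-pattern} to each recursive call, every interval in $\II^{\ell'}$ for $\ell'\leq\ell-1$ that intersects $[t^{(j)}_1,t^{(j)}_2+1)$ actually starts at or after $t^{(j)}_1$ and ends at or before $t^{(j)}_2+1$, hence is contained in it. Therefore the intervals of layers $\II^1,\ldots,\II^{\ell-1}$ that lie in $[t_1,t_2+1)$ partition cleanly according to the $m$ sub-ranges $[t^{(j)}_1,t^{(j)}_2+1)$, and by the induction hypothesis each recursive call contributes cost at most $c_{\ell-1}$. Summing and adding the layer-$\ell$ contribution yields
\[\beta^{\ell-1}+m\cdot c_{\ell-1}\leq\beta^{\ell-1}+\beta\cdot\left(\left\lceil n_{\ell-1}/2\right\rceil+1\right)\cdot c_{\ell-1}=c_\ell\text{,}\]
which completes the induction.

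The only non-routine step is verifying that the lower-layer intervals inside $[t_1,t_2+1)$ align exactly with the recursive calls, i.e.\ no interval of $\II^{\ell'}$ for $\ell'<\ell$ straddles the boundary between consecutive recursive calls. This is precisely what \Cref{obs-service-pattern} gives, since the $\ell_{ext}$ value fed to each recursive call is at least $\ell-1$, forcing a fresh interval in every layer $\ell'\leq\ell-1$ to begin at the start of that recursive call. Once this alignment is in place, the rest of the argument is a straightforward inductive accounting of the recurrence defining $c_\ell$.
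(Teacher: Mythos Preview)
Your proof is correct and follows essentially the same inductive argument as the paper's own proof: induction on $\ell$, trivial base case, and for $\ell>0$ summing the cost $\beta^{\ell-1}$ of the single layer-$\ell$ interval with $(\beta-1)\cdot(\lceil n_{\ell-1}/2\rceil+1)\cdot c_{\ell-1}$ from the recursive calls. Your additional paragraph justifying via \Cref{obs-service-pattern} that the lower-layer intervals partition cleanly among the recursive calls is a detail the paper leaves implicit, so if anything your write-up is slightly more careful.
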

\begin{proof}
    We prove this by induction on $\ell$. The claim is trivially true for $\ell=0$.
    For $\ell>0$, the \textsf{strategy}$(\ell, P, \ell_{ext})$ makes $(\beta-1)\cdot(\lceil n_{\ell-1}/2 \rceil+1)$ recursive calls of \textsf{strategy}$(\ell-1, P', \ell_{ext})$. For each of these calls, the following holds by induction hypothesis. If the call starts at time $t'_1\geq t_1$ and ends at time $t'_2\leq t_2$, then the total cost of intervals in layers $\II_1,\ldots,\II_{\ell-1}$ that intersect the interval $[t'_1,t'_2+1)$ is at most $c_{\ell-1}$. Since there are $(\beta-1)\cdot(\lceil n_{\ell-1}/2 \rceil+1)$ such recursive calls the total cost of intervals in layers $\II_1,\ldots,\II_{\ell-1}$ that intersect the interval $[t_1,t_2+1)$ is at most $(\beta-1)\cdot(\lceil n_{\ell-1}/2 \rceil+1) \cdot c_{\ell-1}$. Adding to this the cost $\beta^{\ell-1}$ of the interval $[t_1, t_2+1)\in\II_{\ell}$ gives the required bound.
%    By the inductive hypothesis, the contribution of the lightest $\ell-1$ servers per \textsf{strategy}$(\ell-1, P', \ell_{ext})$ is at most $c_{\ell-1}$. So, the contribution of the lightest $\ell-1$ servers in the time interval $[t_1, t_2+1)$ is at most $(\beta-1)\cdot(\lceil n_{\ell-1}/2 \rceil+1) \cdot c_{ell-1}$. From~\cref{obs-service-pattern}, we know that $[t_1, t_2+1)$ is an interval in $\mathcal{I}^\ell$. This implies that the contribution of $\ell$'th server is $\beta^{\ell-1}$.
%    Hence, the contribution of the lightest $\ell$ servers of the adversary to the cost of the service pattern during the interval $[t_1, t_2+1)$ is at most $\beta^{\ell-1} + (\beta-1)\cdot(\lceil n_{\ell-1}/2 \rceil+1) \cdot c_{ell-1} \leq c_\ell$.
\end{proof}

%\begin{theorem}
%    The randomized competitive ratio of \PTWO\ is at least $H(n_{k-1}) = \Omega(2^k)$.
%\end{theorem}

\lowerbound*

\begin{proof}
    Consider the service pattern $\mathcal{I} = (\mathcal{I}^1, \ldots, \mathcal{I}^k)$ of the adversary. Every interval in $\mathcal{I}^k$ starts with one marked point and ends just before all the points in the set $S$ are marked in the procedure \textsf{adversary}. Using the standard coupon collector argument, we get that the expected number of \textsf{strategy}$(k-1, S\setminus\{p\}, \ell_{ext})$ made during an interval in $\mathcal{I}^k$ is $(n_{k-1}+1)H(n_{k-1})$. Thus, the amortized cost of intervals in $\mathcal{I}^k$ per \textsf{strategy}$(k-1, S\setminus\{p\}, \ell_{ext})$ call is $\beta^{k-1}/((n_{k-1}+1)H(n_{k-1}))$.
    %Hence the contribution of the $k$'th server of the adversary to the cost per \textsf{strategy}$(k-1,S\setminus\{p\}, \ell_{ext})$ is $\frac{\beta^{k-1}}{(n_{k-1}+1)H(n_{k-1})}$. 
    From~\Cref{advcostrec}, we get that the total cost of intervals in $\mathcal{I}^1, \ldots, \mathcal{I}^{k-1}$ per \textsf{strategy}$(k-1, S\setminus\{p\}, \ell_{ext})$ call is at most $c_{k-1}$. The cost of the revealed service pattern per \textsf{strategy}$(k-1, S\setminus\{p\}, \ell_{ext})$ call is at most $\beta^{k-1}/((n_{k-1}+1)H(n_{k-1})) + c_{k-1}$. The rest of the proof is identical to the proof of Theorem 2 in~\cite{AyyadevaraC_ESA21}. Essentially, for a large $\beta$, the dominant term in the adversary's cost is $\beta^{k-1}/((n_{k-1}+1)H(n_{k-1}))$, while the algorithm's cost is $\beta^{k-1}/(n_{k-1}+1)$ modulo a lower order term due to~\Cref{algcost}, thus implying a lower bound of $H(n_{k-1})=\Omega(2^k)$ on the competitive ratio.
\end{proof}

\end{document}